\renewcommand{\emph}{\textbf}
\theoremstyle{plain}
\newtheorem{theorem}{Theorem}[section]
\newtheorem{proposition}[theorem]{Proposition}
\newtheorem{corollary}[theorem]{Corollary}
\newtheorem{question}[theorem]{Question}
\theoremstyle{definition}  
\newcommand{\setdef}  [2]{\left\{#1 \;\middle|\; #2\right\}}             
\newcommand{\adjointfdec}[4]{(#1 \vdash #2)\colon \xymatrix{#3 \ar@<.5ex>[r]^{f_*} & \ar@<.5ex>[l]^{f^*} #4}}
\newcommand{\FaceLattEM}[1]{\mathcal{F}_\Sigma}
\newcommand{\SuppLattEM}[1]{\mathcal{P}_\Sigma}
\newcommand{\relint}{\mathsf{relint}\,}
\newcommand{\carr}{\mathsf{carr}\,}
\newcommand{\supp}{\mathsf{supp}\,}
\def\poss{\mathsf{poss}\,}
\newcommand{\FF}{\mathcal{F}}
\newcommand{\SP}{\mathcal{S}}
\newcommand{\Hpos}{\mathcal{H}_{\geq \mathbf{0}}}
\newcommand{\vv}{\mathbf{v}}
\newcommand{\zv}{\mathbf{0}}
\newcommand{\Aff}{\mathsf{Aff}}
\newcommand{\NS}{\mathcal{N}}
\newcommand{\HH}{\mathcal{H}}
\newcommand{\VV}{\mathcal{V}}
\newcommand{\ie}{\textit{i.e.}~}
\newcommand{\MM}{\mathcal{M}}
\newcommand{\IFF}{\; \Longleftrightarrow \;}
\newcommand{\IMP}{\; \Rightarrow \;}
\newcommand{\xx}{\mathbf{x}}
\newcommand{\yy}{\mathbf{y}}
\newcommand{\uu}{\mathbf{u}}
\newcommand{\zz}{\mathbf{z}}
\newcommand{\av}{\mathbf{a}}
\newcommand{\Prob}{\mathsf{Prob}}
\newcommand{\rarr}{\rightarrow}
\newcommand{\card}[1]{| #1|}
\newcommand{\Real}{\mathbb{R}}
\newcommand{\Rn}{\Real^n}
\newcommand{\Fp}{\FF^{+}}
\newcommand{\SL}{\mathcal{S}}
\newcommand{\Sg}{\Sigma}
\newcommand{\Rgeq}{\Real_{\geq 0}}
\newcommand{\Bool}{\mathbb{B}}
\newcommand{\LL}{\mathcal{L}}
\newcommand{\MSC}{\mathsf{MSC}}
\newcommand{\LD}{\mathsf{LD}}
\newcommand{\Conv}{\mathsf{Conv}}
\newcommand{\sx}{\xx^{\sigma}}
\newcommand{\Fx}{F_{\xx}}
\newcommand{\Fy}{F_{\yy}}
\newcommand{\Pow}{\mathcal{P}}
\title{Possibilities determine the combinatorial structure of probability polytopes}
\author{Samson Abramsky, Rui Soares Barbosa, Kohei Kishida}
\address{Department of Computer Science, University of Oxford, Oxford, U.K.}
\email{\textup{\texttt{\{samson.abramsky\,|\,rui.soares.barbosa\,|\,kohei.kishida\}@cs.ox.ac.uk}}}
\author{Raymond Lal}
\address{Palantir Technologies, London, U.K.}
\email{rayashlal@gmail.com}
\author{Shane Mansfield}
\address{Institut de Recherche en Informatique Fondamentale \\ Universit\'e Paris Diderot -- Paris 7}
\email{shane.mansfield@univ-paris-diderot.fr}
\begin{document}

\begin{abstract}

We study the set of no-signalling empirical models on a  measurement scenario,
and show that the combinatorial structure of the no-signalling polytope
is completely determined by the possibilistic information given by the support of the models.
This is a special case of a general result which applies to all polytopes presented in a standard form, given by linear equations together with non-negativity constraints on the variables.
\end{abstract}

\maketitle

\section{Introduction}

This paper is concerned with the combinatorial structure of the polytopes of probability models which are widely studied in quantum information and foundations.
Much current study focusses on the \emph{no-signalling polytope} \cite{popescu2014nonlocality,pironio2011extremal}, comprising those probability models whose correlations are consistent with the constraints imposed by relativity, and on characterizing the set of quantum correlations \cite{NavascuesPironioAcin2008}, which is contained within this polytope.

As is well known, quantum correlations exceed those which can be achieved using classical, ``local realistic'' models. While Bell's original proof of this fact  \cite{Bell-thm} used the detailed probabilistic structure of the models arising from quantum mechanics to show that they violated the Bell inequalities which hold for classical models, subsequent proofs by Greenberger, Horne and Zeilinger \cite{GHZ,GHSZ90}, Hardy \cite{Hardy92:nonlocality1,Hardy93:nonlocality2}, and others \cite{cabello2001bell,zimba1993bell,Mermin90:QuantumMysteriesRevisited-SimplifiedGHZ1}, were inequality-free, and even probability-free. In \cite{AbramskyBrandenburger}, a hierarchy of forms of non-locality, or more generally contextuality, was defined. The higher levels, of \emph{logical contextuality} (generalizing Hardy arguments), \emph{strong contextuality} (generalizing GHZ arguments), and \emph{All-versus-Nothing contextuality} \cite{AbramskyEtAl:ContextualityCohomologyAndParadox}, use only the \emph{possibilistic} information from the probability models. That is, they need only the information about which events are possible (probability greater than zero) or impossible (probability zero). In other words, they only refer to the \emph{supports} of the probability distributions. Passing from probability models to their supports (the ``possibilistic collapse'') evidently loses a great deal of information.
In this paper, we show that nevertheless, the \emph{combinatorial structure} of the no-signalling polytope is completely captured by the possibilistically collapsed models, thus confirming that much structural information can in fact be gained from these apparently much simpler models.

In more precise terms, the \emph{combinatorial type} of a polytope is given by its \emph{face lattice} \cite{ziegler1995lectures}, the set of faces of the polytope, ordered by inclusion. These face lattices have a rich structure, and have been extensively studied in combinatorics.

Our main result can be stated as follows.
\begin{theorem}
Fix a ``measurement scenario'', specifying a set of variables which can be observed or measured, the possible outcomes of these measurements, and which variables are compatible and can be measured together. We can then define a polytope $\NS$ of no-signalling probability models over this scenario. Call the face lattice of this polytope $\FF$. Now let $\SP$ be the set of supports or possibilistic collapses of the models in $\NS$. $\SP$ is naturally ordered by context-wise inclusion of supports. Then there is an order-isomorphism
\[ \FF \cong \SP . \]
\end{theorem}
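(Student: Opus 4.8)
The plan is to derive the theorem from a general lemma about any polytope presented in \emph{standard form} $P = \{\xx \in \Rn : A\xx = b,\ \xx \ge \zv\}$, namely: the lattice of nonempty faces of $P$ is order-isomorphic to the poset $\mathsf{Supp}(P) := \{\supp(\xx) : \xx \in P\} \subseteq \Pow(\{1,\dots,n\})$ of its \emph{realized supports}, ordered by inclusion, where $\supp(\xx) = \{i : \xx_i > 0\}$. Granting this, it remains only to recognise $\NS$ as a polytope in standard form and to see that its realized-support poset is exactly $\SP$ under context-wise inclusion; both are short once the lemma is in place.

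For the lemma, I would study the map sending a nonempty face $F$ to $S_F := \supp(\xx)$ for any $\xx \in \relint(F)$. This is well defined because a face of $P$ is again a standard-form polytope (intersect its affine hull with $\Rn_{\ge 0}$), so $\relint(F) \ne \emptyset$, and because the support of a strictly positive convex combination of non-negative vectors is the union of their supports, while any two points of $\relint(F)$ are each interior to a segment of $F$ through the other --- so all of $\relint(F)$ has the same support $S_F$, the largest support occurring in $F$. The crux is the identity $F = \{\xx \in P : \supp(\xx) \subseteq S_F\}$. For ``$\subseteq$'', given $\yy \in F$ extend the segment $[\yy,\xx]$ slightly past a chosen $\xx \in \relint(F)$ to a point $\zz \in F$ with $\xx$ in the open segment $(\yy,\zz)$; then $S_F = \supp(\xx) = \supp(\yy) \cup \supp(\zz) \supseteq \supp(\yy)$. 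For ``$\supseteq$'', given $\yy \in P$ with $\supp(\yy) \subseteq S_F$ and $\xx \in \relint(F)$, put $\zz := (1+\varepsilon)\xx - \varepsilon\yy$; then $A\zz = b$, and for small $\varepsilon > 0$ also $\zz \ge \zv$, since the only coordinates that could turn negative are those outside $S_F$, where both $\xx$ and $\yy$ vanish; hence $\zz \in P$, and $\xx = \tfrac{1}{1+\varepsilon}\zz + \tfrac{\varepsilon}{1+\varepsilon}\yy$ places $\xx$ in the relative interior of a segment of $P$ with endpoint $\yy$, forcing $\yy \in F$ because $F$ is a face containing $\xx$. From this identity everything else is formal: $F \mapsto S_F$ is injective (each $F$ is recovered from $S_F$); its image is all of $\mathsf{Supp}(P)$, since $\supp(\xx) = S_F$ for $F$ the smallest face containing $\xx$; and it is an order isomorphism, because $F \subseteq F'$ gives $S_F \subseteq S_{F'}$ (a point of $\relint(F)$ lies in $F'$, and all points of $F'$ have support inside $S_{F'}$), while $S_F \subseteq S_{F'}$ gives $F \subseteq F'$ straight from the identity. (If one counts the empty face, one adjoins a formal bottom to $\mathsf{Supp}(P)$; this is vacuous here, as normalization forbids a model with empty support.)

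Next I would identify the two sides for $\NS$: writing a no-signalling model out in coordinates as a family $(e_C)_{C \in \calM}$ of probability weights --- one weight $e_C(s)$ per context $C$ and per joint outcome $s$ over $C$ --- a model is a point of $\Real^I_{\ge 0}$ with index set the disjoint union $I = \bigsqcup_{C \in \calM} \EventShf(C)$, and $\NS$ is the subset of $\Real^I_{\ge 0}$ cut out by the linear normalization equations $\sum_s e_C(s) = 1$ and the linear no-signalling (marginal-agreement) equations $\sum_{s|_{C\cap C'} = t} e_C(s) = \sum_{s'|_{C\cap C'} = t} e_{C'}(s')$. So $\NS$ is a standard-form polytope --- bounded, since $0 \le e_C(s) \le 1$. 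For $e \in \NS$, its support as a point of $\Real^I$ is $\{(C,s) : e_C(s) > 0\}$, which is precisely the possibilistic collapse of $e$, recording context by context which joint outcomes are possible; and inclusion of such subsets of $I$ is by definition context-wise inclusion of supports. Hence $\mathsf{Supp}(\NS) = \SP$ as posets, and the lemma yields $\FF \cong \mathsf{Supp}(\NS) = \SP$.

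The one genuinely substantive step is the ``$\supseteq$'' inclusion in $F = \{\xx \in P : \supp(\xx) \subseteq S_F\}$: to certify that a point $\yy$ with small support actually belongs to the face $F$, one must produce a companion point $\zz \in P$ with $\xx$ between $\yy$ and $\zz$, and this is exactly where standard form is used --- all constraints other than sign constraints are affine equalities, which are preserved by the push $\zz = (1+\varepsilon)\xx - \varepsilon\yy$, while the sign constraints are preserved because the coordinates in danger of going negative are precisely those on which $\yy$ already vanishes. Well-definedness, injectivity, surjectivity onto $\mathsf{Supp}(P)$, and order-reflection are then routine, and the translation to $\NS$ and $\SP$ is purely a matter of unwinding definitions.
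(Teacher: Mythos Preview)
Your proposal is correct and follows essentially the same route as the paper: reduce to a general lemma about polytopes in standard form, establish the key identity $F = \{\xx \in P : \supp \xx \subseteq S_F\}$ via the segment-extension argument $(1+\varepsilon)\xx - \varepsilon\yy$, and then specialize to $\NS$. The only organizational difference is that the paper packages the isomorphism via an abstract ``two surjections'' lemma applied to $\carr : P \to \Fp(P)$ and $\supp : P \to \SL(P)$, whereas you build the map $F \mapsto S_F$ directly and verify its properties by hand; the technical core is identical.
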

Thus the combinatorial type of the polytope $\NS$ is completely determined by its possibilistic collapse.

This result has a number of interesting corollaries. For example:
\begin{itemize}
\item All the models in the relative interior of each face $F \in \FF$ have the same support.
\item The vertices of $\NS$ are exactly the probability models with minimal support in $\NS$.
Moreover, there is only one probability model in $\NS$ for each such minimal support.
\item
The vertices of $\NS$ can be written as the disjoint union of the local, deterministic models --- the vertices of the polytope of classical models --- and the strongly contextual models with minimal support. 
\item Thus the extremal contextual probability models are completely determined by their supports.
\end{itemize}

In fact, this result  applies to a much wider class of polytopes. Note that the no-signalling polytope, for any given measurement scenario, is defined by the following types of constraint:
\begin{itemize}
\item Non-negativity
\item Linear equations: namely normalization, and the no-signalling conditions.
\end{itemize}
In geometric terms, this says that $\NS = \Hpos \cap \Aff (\NS)$, where $\Aff(\NS)$ is the affine subspace generated by $\NS$, and $\Hpos$ is the non-negative orthant, \ie the set of all vectors $\vv$ with $\vv \geq \zv$.

It is a standard result that every linear program can be put in a ``standard form'' of this kind \cite{matousek2007understanding}, so that its associated polytope of constraints is of the type we are considering.


Now our theorem in fact applies at this level of generality.

\begin{theorem}[General Version]
Let $P$ be a polytope such that $P = \Hpos \cap \Aff(P)$. Let $\FF(P)$ be the face lattice of $P$.
Let $\SP(P)$ be the set of ``supports'' of points in $P$, \ie $0/1$-vectors where each positive component of $\vv \in P$ is mapped to 1, while each zero component is left fixed.
$\SP(P)$ is naturally ordered componentwise, with $0 < 1$. Then there is an order-isomorphism
\[ \FF(P) \cong \SP(P) . \]
\end{theorem}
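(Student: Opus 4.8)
The plan is to construct two explicit, mutually inverse, order‑preserving maps between $\FF(P)$ and $\SP(P)$; here I read $\FF(P)$ as the poset of nonempty faces of $P$ ordered by inclusion (the empty face should be matched with the all‑zero support, which lies in $\SP(P)$ precisely when $\zv\in P$, so this edge case is harmless). Index the coordinates by $\{1,\dots,n\}$. For a nonempty face $F$, let its \emph{zero set} be $Z(F):=\{i : v_i=0\text{ for all }v\in F\}$ and define $\sigma(F)\in\{0,1\}^n$ to be the indicator vector of the complement of $Z(F)$. In the other direction, for $S\subseteq\{1,\dots,n\}$ write $F_S:=P\cap\bigcap_{i\in S}\{x : x_i=0\}$ (a \emph{coordinate face}), and for $s\in\SP(P)$ set $Z(s):=\{i : s_i=0\}$ and $\tau(s):=F_{Z(s)}$.

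First I would check these land in the right places. For $\sigma$: the face $F$ contains a point whose support is exactly the complement of $Z(F)$ — for each $i\notin Z(F)$ pick a point of $F$ with positive $i$‑th coordinate and average the finitely many points so obtained; by convexity the average lies in $F\subseteq P$, is positive in every coordinate outside $Z(F)$, and vanishes on $Z(F)$, so $\sigma(F)\in\SP(P)$. (Such an average is in fact a relative‑interior point of $F$, since inside $\Aff(F)$ the coordinates in $Z(F)$ vanish identically while those outside stay positive near it; this incidentally proves the corollary that all points of $\relint(F)$ have support $\sigma(F)$.) For $\tau$: each inequality $x_i\ge 0$ is valid on $P$, so each $P\cap\{x : x_i=0\}$ is a face of $P$, hence so is the finite intersection $F_{Z(s)}$, and it is nonempty because any $v\in P$ with $\supp(v)=s$ lies in it; moreover $\tau(s)$ depends only on $s$.

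Next I would verify the two composites and monotonicity. That $\sigma\circ\tau=\mathrm{id}$ amounts to $Z(F_{Z(s)})=Z(s)$: the inclusion $\supseteq$ is trivial, and for $\subseteq$, if $i\notin Z(s)$ then $s_i=1$, so a witness $v$ with $\supp(v)=s$ lies in $F_{Z(s)}$ and has $v_i>0$, whence $i\notin Z(F_{Z(s)})$. Both maps are visibly monotone: $F\subseteq F'$ forces $Z(F)\supseteq Z(F')$ and hence $\sigma(F)\le\sigma(F')$, while $s\le s'$ forces $Z(s)\supseteq Z(s')$ and hence $\tau(s)\subseteq\tau(s')$. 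Thus, once we know $\tau\circ\sigma=\mathrm{id}$ — that is, that every nonempty face satisfies $F=F_{Z(F)}$ — it follows that $\sigma$ is an order‑isomorphism $\FF(P)\cong\SP(P)$ with inverse $\tau$.

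The one substantial step, and the only place the hypothesis $P=\Hpos\cap\Aff(P)$ is used, is the identity $F=F_{Z(F)}$ for every nonempty face; the inclusion $F\subseteq F_{Z(F)}$ is immediate, so the content is the reverse. Here I would recall that every nonempty proper face of a polytope is exposed, writing $F=P\cap\{x : c\cdot x=c_0\}$ with $c\cdot x\le c_0$ valid on $P$, \ie $c_0=\max\{c\cdot x : x\in\Aff(P),\ x\ge 0\}$ (and $F=P$ is trivially $F_\emptyset$); by Farkas' lemma this valid inequality is, modulo the affine equations defining $\Aff(P)$, a non‑negative combination of the coordinate inequalities, so $c_0-c\cdot x=\sum_i\lambda_i x_i$ on $\Aff(P)$ with all $\lambda_i\ge 0$. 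Since $x\ge 0$ on $P$, the equation $c\cdot x=c_0$ is then equivalent on $P$ to $x_i=0$ for every $i$ with $\lambda_i>0$; hence $F=F_S$ with $S=\{i : \lambda_i>0\}$, and since $S\subseteq Z(F)$ we get $F_{Z(F)}\subseteq F_S=F$, while $F\subseteq F_{Z(F)}$ always, so $F=F_{Z(F)}$. (Equivalently, one may quote the standard description of the faces of a polyhedron in standard form $\{x : Bx=d,\ x\ge 0\}$ as exactly the sets obtained by tightening some subset of the non‑negativity constraints.) I expect this Farkas step to be the only real obstacle; everything else is bookkeeping. The stated corollaries then drop out by tracking $\sigma$ and $\tau$ on the top element $P$ and on the minimal elements of $\FF(P)$, which are the vertices: the vertices of $\NS$ correspond to the minimal supports, each such support is realized by a unique model because the corresponding face is a single point, and so on.
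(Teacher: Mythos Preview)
Your proof is correct, but it takes a genuinely different route from the paper's. Both proofs reduce to showing that every nonempty face of $P$ is a \emph{coordinate face}, i.e.\ $F=F_{Z(F)}$ in your notation. You establish this via LP duality/Farkas: writing $F=P\cap\{c\cdot x=c_0\}$ with $c_0$ optimal, strong duality gives $c_0-c\cdot x=\sum_i\lambda_i x_i$ on $\Aff(P)$ with $\lambda\ge 0$, whence $F=F_{\{i:\lambda_i>0\}}$. The paper instead avoids duality entirely and argues through the \emph{carrier map}: for any $\xx\in P$ it shows directly, using Rockafellar's characterization of the relative interior (Theorem~\ref{relintthm}), that $\xx\in\relint F_{\xx}$ where $F_{\xx}=\{z\in P:\supp z\le\supp\xx\}$; since carriers are unique and every nonempty face has a relative-interior point, this forces every face to equal some $F_{\xx}$. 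The order-isomorphism then drops out of an abstract ``two surjections from the same set'' lemma (Proposition~\ref{triangprop}) applied to $\carr:P\to\Fp(P)$ and $\supp:P\to\SL(P)$. Your argument is more algebraic and makes explicit the classical LP fact that faces of a standard-form polyhedron are obtained by tightening coordinate constraints; the paper's is more geometric and entirely self-contained (no appeal to Farkas or duality), at the cost of importing the relative-interior machinery. Your averaging construction of a maximal-support point in $F$ is essentially the same computation the paper does to prove $\xx\in\relint F_{\xx}$, so you are close to having both proofs in hand.
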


The structure of the remainder of the paper is as follows. In Section~2, we provide  background on measurement scenarios and empirical models. In Section~3 we review some basic notions on partial orders and lattices, and in Section~4, we review some standard material on polytopes. We prove our main result in Section~5, and apply it to probability polytopes, in particular no-signalling polytopes, in Section~6. 

\section{Measurement Scenarios and Empirical Models}
\label{empsec}

We shall give a brief introduction to the basic notions of measurement scenarios and empirical models, as developed in the sheaf-theoretic approach to contextuality and non-locality. For further discussion, motivation and technical details, see \cite{AbramskyBrandenburger,AbramskyEtAl:ContextualityCohomologyAndParadox}. An extended introduction and overview is given in \cite{Abr15}.

A \emph{measurement scenario} is a triple $(X, \MM, O)$ where:
\begin{itemize}
\item $X$ is a set of variables which can be measured, observed or evaluated

\item $\MM$ is a family of sets of variables, those which can be measured together. These form the \emph{contexts}.

\item $O$ is a set of possible outcomes or values for the variables.
\end{itemize}

In this paper, we shall only consider \emph{finite} measurement scenarios, where the sets $X$, $O$, and hence also $\MM$ are finite.
This allows us to avoid measure-theoretic technicalities, while capturing the primary objects of interest in quantum information and foundations.

Given a measurement scenario $(X, \MM, O)$, an \emph{empirical model} for this scenario is a family $\{ e_C \}_{C \in \MM}$ of probability distributions
\[ e_C \in \Prob(O^C), \quad C \in \MM . \]
Here we write $\Prob(X)$ for the set of probability distributions on a finite set $X$. Such distributions are simply represented by functions $d: X \rarr [0, 1]$ which are normalized:
\[ \sum_{x \in X} d(x) = 1 . \]
The probability of an event $S \subseteq X$ is then given by
\[ d(S) \; = \; \sum_{x \in S} d(x) . \]
The set $O^C$ is the set of all assignments $s : C \rarr O$ of outcomes to the variables in the context $C$.
Such an assignment represents a \emph{joint outcome} of measuring all the variables in the context. 

Given  $e_C \in \Prob(O^C)$, and a subset $U \subseteq C$, we have the operation of restricting $e_C$ to $U$ by \emph{marginalization}:
\[ e_C |_{U} \in \Prob(O^U) , \]
defined by:
\[ e_C |_{U}(s) \; = \; \sum_{t \in O^C, \; t |_U = s} e_C(t) . \]
Here $t |_U$ is the restriction of the assignment $t$ to the variables in $U$.

We say that an empirical model $\{ e_C \}_{C \in \MM}$ is \emph{no-signalling} if for all $C, C' \in \MM$:
\[ e_C |_{C \cap C'} \; = \; e_{C'} |_{C \cap C'} . \]
Thus an empirical model is no-signalling if the marginals from any pair of contexts to their overlap agree. This corresponds to a general form of an important physical principle.
Suppose that $C = \{ a, b \}$, and $C' = \{ a, b' \}$, where $a$ is a variable measured by an agent Alice, while $b$ and $b'$ are variables measured by Bob, who may be spacelike separated from Alice. Then under relativistic constraints, Bob's choice of measurement --- $b$ or $b'$ --- should not be able to affect the distribution Alice observes on the outcomes from her measurement of $a$. This is captured by saying that the distribution on $\{ a \} = \{ a, b \} \cap \{ a, b' \}$ is the same whether we marginalize from the  distribution  $e_C$, or the distribution $e_{C'}$.
The general form of this constraint is shown to be satisfied by the empirical models arising from quantum mechanics in \cite{AbramskyBrandenburger}.

\textbf{Example}
Consider the following table:
\begin{center}
\begin{tabular}{ll|ccccc}
A & B & $(0, 0)$ & $(1, 0)$ & $(0, 1)$ & $(1, 1)$  &  \\ \hline
$a$ & $b$ & $0$ & $1/2$ & $1/2$ & $0$ & \\
$a'$ & $b$ & $3/8$ & $1/8$ & $1/8$ & $3/8$ & \\
$a$ & $b'$ & $3/8$ & $1/8$ & $1/8$ & $3/8$ &  \\
$a'$ & $b'$ & $3/8$ & $1/8$ & $1/8$ & $3/8$ & 
\end{tabular}
\end{center}

This represents a situation where Alice and Bob can each choose measurement settings and observe outcomes. Alice can choose the settings $a$ or $a'$, while, independently,  Bob can choose $b$ or $b'$. The total set of variables is $X = \{ a, a', b, b' \}$.
The \emph{measurement contexts}  are
\[ \MM \; = \; \{\{ a, b \}, \;\; \{ a', b\}, \;\; \{ a, b' \}, \;\; \{ a', b' \} \}.\]
Each measurement has possible outcomes in $O = \{ 0, 1 \}$.
The matrix entry at row $(a', b)$ and column $(0,1)$ 
corresponds to the  \emph{event }
\[ \{ a' \mapsto 0,  \; b \mapsto 1 \} . \]
Each row of the table specifies a \emph{probability distribution} $e_C$ on events $O^C$ for a given choice of measurements $C$.
Thus the table directly corresponds to an empirical model on the measurement scenario $(X, \MM, O)$.

We verify that this table satisfies the no-signalling condition. Consider the following schematic representation of the table

\begin{center}
\begin{tabular}{ll|ccccc}
A & B & $(0, 0)$ & $(1, 0)$ & $(0, 1)$ & $(1, 1)$  &  \\ \hline
$a$ & $b$ & $c$ & $d$ & $e$ & $f$ & \\
$a'$ & $b$ & $g$ & $h$ & $i$ & $j$ & \\
$a$ & $b'$ & $k$ & $l$ & $m$ & $n$ &  \\
$a'$ & $b'$ & $o$ & $p$ & $q$ & $r$ & 
\end{tabular}
\end{center}
where we have labelled the entries with the letters $c$, \ldots , $r$. The no-signalling conditions for the non-empty intersections of contexts are given by the following equations:
\begin{center}
\begin{tabular}{lclclcl}
$c + e = k + m$, & $\quad$ & $d + f = l + n$, & $\quad$ & $g + i = o + q$, & $\quad$ & $h + j = p + r$ \\
$c + d = g + h$,  & $\quad$ & $e + f = i + j$,  & $\quad$ & $k + l = o + p$, & $\quad$ & $m + n = q + r$ \\
\end{tabular}
\end{center}
We see that, for example, the first equation is verified in the table above, since $0 + 1/2 = 3/8 + 1/8$.

This table has the additional property that it can be realized by a quantum state and appropriate choices of local observables.
It thus provides the basis for a proof of Bell's theorem \cite{Bell-thm}. See \cite{AbramskyBrandenburger} for an extended discussion.

Situations with any number of parties, measurement settings for each party, and outcomes can be represented similarly by empirical models on measurement scenarios.

\textbf{Example}
A different kind of example will illustrate the wide scope of our notion of empirical models over measurement scenarios.
Consider the measurement scenario $(X, \MM, O)$, where:
\begin{itemize}
\item $X$ is a set of 18 variables, $\{ A, \ldots , O \}$.
\item $\MM = \{ U_1 , \ldots , U_9 \}$, where the columns $U_i$ are the contexts:
\begin{center}
\begin{tabular}{|c|c|c|c|c|c|c|c|c|} \hline
$U_1$ & $U_2$ &  $U_3$ & $U_4$ & $U_5$ & $U_6$ & $U_7$ & $U_8$ & $U_9$ \\ \hline\hline
$A$ & $A$ & $H$ & $H$ & $B$ & $I$ & $P$ & $P$  & $Q$ \\ \hline
$B$ & $E$ & $I$ & $K$ & $E$ & $K$ & $Q$ & $R$ & $R$  \\ \hline
$C$ & $F$ & $C$ & $G$ & $M$ &  $N$ & $D$ & $F$ & $M$  \\ \hline
$D$ & $G$ & $J$ & $L$ & $N$  & $O$ & $J$ & $L$ & $O$  \\ \hline
\end{tabular}
\end{center}

\item $O = \{ 0, 1 \}$.
\end{itemize}
The empirical model has the property that the support of the distribution $e_C$ for each context $C$ consists of those assignments $s : C \rarr O$ which assign $1$ to exactly one variable in $C$.

This example shows that \emph{Kochen-Specker constructions} \cite{kochen1975problem},\cite{cabello1996bell} also fall within the scope of our definitions. See \cite{AbramskyBrandenburger} for further discussion.

\subsection*{Strong Contextuality}

The previous example also exhibits the property of \emph{strong contextuality}, which is a key part of the hierarchy of degrees of contextuality identified and studied in \cite{AbramskyBrandenburger}.
Let $e$ be an empirical model over a measurement scenario $(X, \MM, O)$.
We say that a global assignment $s : X \rarr O$  is consistent with the support of $e$, if  $s |_C \in \supp e_C$ for all $C \in \MM$. Here $\supp e_C := \{ s \in O^C \mid e_C(s) > 0 \}$.
We define $e$ to be \emph{strongly contextual} if there is no global assignment consistent with its support.

One can show that any empirical model satisfying the specification of the previous example is strongly contextual in this sense \cite{AbramskyBrandenburger}.

\subsection*{Vector representation of empirical models}
Each distribution $e_C$ can be specified by a list of $\card{O^C}$ numbers in $[0, 1]$, one for each assignment $s \in O^C$.
We can concatenate these lists to represent
the empirical model as a whole  by a vector in $\Real^n$, where
\[ n \; := \; \sum_{C \in \MM} \card{O^C} . \]
We shall henceforth pass freely from empirical models to their representation as real vectors without further comment.

\section{Interlude on partial orders and lattices}

We briefly pause to review some basic notions on partial orders and lattices. 
We refer to \cite{davey2002introduction} for additional background.

We recall that a  partial order is a set $P$ equipped with a binary relation $\leq$ which is reflexive, antisymmetric and transitive.
An element $\bot \in P$ is the least or bottom element if $\bot \leq x$ for all $x \in P$. Given two elements $x, y \in P$, their join or least upper bound, if it exists, is an element $z$ such that $x \leq z$, $y \leq z$, and whenever $x \leq w$ and $y \leq w$, then $z \leq w$. Note that such an element $z$ is necessarily unique if it exists; in this case, we write it as $x \vee y$.
If a poset $P$ has a least element, and a join $x \vee y$ for all elements $x, y \in P$, then we say that $P$ is a join semilattice.
Note that for any set $X$, the powerset $\Pow(X)$, ordered by inclusion, is a join semilattice, with the join being given by set union. The least element is the empty set.

The notion of meet or greatest lower bound is defined in the obvious dual fashion. Notation is $x \wedge y$ for the meet of elements $x$, $y$ if it exists. Also, we write $\top$ for the greatest or top element of a poset, if it exists. In the case of $\Pow(X)$, the meet is given by set intersection, and the top element is $X$. If a poset has a meet for every pair of elements, and a top element, it is a meet semilattice.

If a poset is both a join semilattice, and a meet semilattice, then it is a lattice.

Note that a join semilattice has a least upper bound for every finite subset of its elements. For subsets with more than two elements, this can be constructed by iterating the binary join operation.

Now suppose that we have a \emph{finite} join semilattice $P$. Then we can construct the meet of any subset $S \subseteq P$, by taking the join of the set $L$ of all the lower bounds of $S$ in $P$. Thus $P$ is in fact a lattice. A similar statement holds for a finite meet semilattice.

\section{Background on Polytopes}

We review some standard definitions and results on polytopes.
Our primary references for this section will be \cite{rockafellar2015convex} and \cite{ziegler1995lectures}.

We shall work concretely in $\Real^n$. We write $d(\xx,\yy)$ for the Euclidean distance function on $\Real^n$, and $\xx \cdot \yy$ for the inner product of vectors $\xx, \yy \in \Rn$. Vectors are ordered componentwise:
\[\xx \leq \yy \; \IFF \; (\forall i : 1 \leq i \leq n) \; \xx_i \leq \yy_i . \]
A \emph{closed half-space} in $\Rn$ is a set of the form
\[ \{ \xx \mid \av \cdot \xx \geq b \} \]
for some $\av \in \Rn$, $b \in \Real$. An \emph{$\HH$-polytope} is a bounded intersection of a finite set of closed half-spaces in $\Rn$. A \emph{$\VV$-polytope} is the convex hull $\Conv(S)$ of a finite set of points  $S \subseteq \Rn$. The Fundamental Theorem of polytopes \cite{ziegler1995lectures} says that these two notions coincide; we shall refer simply to polytopes.

An \emph{affine combination} of points $\xx_1, \ldots , \xx_k$ in $\Rn$ is an expression of the form
\[ \sum_{i=1}^{k} \lambda_i \xx_i , \qquad \sum_{i=1}^{k} \lambda_i = 1 . \]
We write $\Aff(S)$ for the set of all affine combinations of points of $S \subseteq \Rn$; this is the affine subspace generated by $S$.
Note that, if a linear equation $\av \cdot \xx = b$ is satisfied by all the elements of $S$, then it is satisfied by all the elements of $\Aff(S)$.

A linear inequality $\av \cdot \xx \geq b$ is \emph{valid} for a polytope $P$ if it is satisfied by every $\xx \in P$. Such a valid inequality defines a \emph{face} $F$ of $P$:
\[ F \; := \; \{ \xx \in P \mid \av \cdot \xx = b \} . \]
Any face is itself a polytope.
We write $\FF(P)$ for the set of faces of $P$, and $\Fp(P)$ for the set of non-empty faces.

The \emph{relative interior} of a polytope $P$ consists of those points $\xx \in P$ such that, for some $\epsilon > 0$, $\yy \in P$ whenever $\yy \in \Aff(P)$ and $d(\xx,\yy) \leq \epsilon$. We write $\relint P$ for the relative interior of $P$.

We will use the following characterization of the relative interior (\cite{rockafellar2015convex}[Theorem 6.4]).
\begin{theorem}
\label{relintthm}
A point $\xx \in P$ is in $\relint P$ if and only if for all $\yy \in P$, for some $\mu > 1$, $\mu \xx + (1-\mu) \yy$ is in $P$.
\end{theorem}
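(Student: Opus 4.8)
The plan is to treat the two implications separately. The forward direction is short and geometric; the reverse direction is the one with content and needs a small ``bootstrap'' via a point already known to lie in $\relint P$.

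For the forward direction, I would assume $\xx \in \relint P$, so that there is $\epsilon > 0$ such that every $\zz \in \Aff(P)$ with $d(\xx, \zz) \le \epsilon$ belongs to $P$. Given $\yy \in P$: if $\yy = \xx$ then $\mu\xx + (1-\mu)\yy = \xx \in P$ for every $\mu$, so any $\mu > 1$ works; otherwise put $\mu := 1 + \epsilon/d(\xx,\yy) > 1$. Then $\mu\xx + (1-\mu)\yy = \xx + (\mu-1)(\xx - \yy)$ is an affine combination of $\xx$ and $\yy$, hence lies in $\Aff(P)$, and its distance from $\xx$ is $(\mu-1)\,d(\xx,\yy) = \epsilon$, so it lies in $P$, as required.

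For the reverse direction, I would first record the standard fact that $\relint P \neq \emptyset$: pick a maximal affinely independent subset $\{p_0,\dots,p_k\}$ of $P$; by maximality its affine hull is all of $\Aff(P)$, the simplex it spans lies in $P$, and the barycenter $c := \frac{1}{k+1}\sum_i p_i$ lies in the relative interior of that simplex (a short barycentric-coordinate argument), hence in $\relint P$. The key move is then to apply the hypothesis with $\yy := c$: we obtain $\mu > 1$ with $c' := \mu\xx + (1-\mu)c \in P$, which rearranges to
\[ \xx \;=\; \tfrac{1}{\mu}\,c' \;+\; \bigl(1 - \tfrac{1}{\mu}\bigr)\,c , \]
exhibiting $\xx$ as a convex combination of $c' \in P$ and $c \in \relint P$ with strictly positive weight on $c$. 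Finally I would invoke the line-segment principle: if $c \in \relint P$, $q \in P$ and $\lambda \in [0,1)$, then $\lambda q + (1-\lambda)c \in \relint P$; applied with $q := c'$ and $\lambda := 1/\mu \in (0,1)$ this gives $\xx \in \relint P$. The principle itself is a direct computation: if every $\zz \in \Aff(P)$ with $d(c,\zz) \le \epsilon$ lies in $P$, then every $w \in \Aff(P)$ with $d(\lambda q + (1-\lambda)c,\, w) \le (1-\lambda)\epsilon$ can be written as $\lambda q + (1-\lambda)c''$ with $c'' := (w - \lambda q)/(1-\lambda) \in \Aff(P)$ and $d(c, c'') \le \epsilon$, so $c'' \in P$ and hence $w \in P$.

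The only genuinely substantive point is the reverse implication, and its crux is the idea of feeding a relative-interior point of $P$ into the hypothesis --- which is exactly where the convexity (here, polytope-ness) of $P$ is essential, since it guarantees $\relint P \neq \emptyset$. Everything after that reduces to the elementary line-segment principle above, whose verification is a routine triangle-inequality computation inside $\Aff(P)$ that I would not spell out in full.
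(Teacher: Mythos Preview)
Your argument is correct. Note, however, that the paper does not give its own proof of this statement: it is quoted as a standard characterization of the relative interior, with a reference to \cite{rockafellar2015convex}, Theorem~6.4, and then used as a black box in the proof of Proposition~\ref{carrprop}. So there is no ``paper's proof'' to compare against beyond the cited source.

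That said, what you have written is essentially the classical proof one finds in Rockafellar. The forward implication is the immediate one. For the converse you correctly isolate the two standard ingredients: nonemptiness of $\relint P$ (Rockafellar, Theorem~6.2), and the line-segment principle that a strict convex combination of a point of $P$ with a point of $\relint P$ again lies in $\relint P$ (Rockafellar, Theorem~6.1). Feeding a relative-interior point $c$ into the hypothesis and then rewriting $\xx$ as $(1/\mu)c' + (1-1/\mu)c$ is exactly the bootstrap Rockafellar uses. Your sketch of the line-segment principle via $c'' = (w - \lambda q)/(1-\lambda)$ is also the standard verification. In short: the proposal is sound and matches the textbook argument the paper defers to; nothing is missing.
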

Intuitively, this says that a point $\xx$ of $P$ is in the relative interior if the line segment $[\yy, \xx]$ from any point $\yy$ of $P$ to $\xx$ can be extended beyond $\xx$ while remaining in $P$.

We will also use the following basic result (\cite{rockafellar2015convex}[Theorem 18.2]).
\begin{theorem}
Every polytope $P$ can be written as the disjoint union of the relative interiors of its non-empty faces:
\[ P \; = \; \bigsqcup_{F \in \Fp(P)} \relint F . \]
\end{theorem}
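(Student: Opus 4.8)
The plan is to establish two statements, from which the decomposition follows immediately: (i) the relative interiors of the non-empty faces \emph{cover} $P$, and (ii) they are \emph{pairwise disjoint}. (We may assume $P \neq \emptyset$, else both sides are empty.) Both will be obtained from a single lemma: \emph{if $F$ is a face of $P$, $\xx \in \relint F$, and $H$ is any face of $P$ with $\xx \in H$, then $F \subseteq H$.}

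First I would prove this lemma from Theorem~\ref{relintthm}. Write $H = \{ \zz \in P \mid \av \cdot \zz = b \}$ for a valid inequality $\av \cdot \zz \geq b$ of $P$; since $\xx \in H$ we have $\av \cdot \xx = b$. Let $\yy \in F$ be arbitrary. As $F$ is itself a polytope and $\xx \in \relint F$, Theorem~\ref{relintthm} applied to $F$ gives $\mu > 1$ with $\zz := \mu \xx + (1-\mu)\yy \in F \subseteq P$. Validity of the inequality then forces $\av \cdot \zz \geq b$, that is $\mu b + (1-\mu)(\av \cdot \yy) \geq b$; since $1 - \mu < 0$ this rearranges to $\av \cdot \yy \leq b$, and combined with validity ($\av \cdot \yy \geq b$) we get $\av \cdot \yy = b$, so $\yy \in H$. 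Hence $F \subseteq H$. Disjointness is now immediate: if $\xx \in \relint F \cap \relint G$, applying the lemma with $H := G$ gives $F \subseteq G$, and symmetrically $G \subseteq F$, so $F = G$.

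For the covering, fix an $\HH$-representation $P = \{ \xx \mid \av_i \cdot \xx \geq b_i,\ i \in I \}$ with $I$ finite, and for $\xx \in P$ let $I(\xx) = \{ i \in I \mid \av_i \cdot \xx = b_i \}$ be the set of tight constraints. Set $F_{\xx} := \{ \zz \in P \mid \av_i \cdot \zz = b_i \text{ for all } i \in I(\xx) \}$; this is a face of $P$ (an intersection of faces — summing the tight inequalities exhibits it as the face cut out by a single valid inequality, with $F_\xx = P$ when $I(\xx) = \emptyset$), and it contains $\xx$, hence is non-empty. I claim $\xx \in \relint F_{\xx}$: given $\yy \in F_{\xx}$, the constraints $i \in I(\xx)$ hold with equality at any point $\mu \xx + (1-\mu)\yy$ because both $\xx$ and $\yy$ satisfy them, while for each slack constraint $j \notin I(\xx)$ we have $\av_j \cdot \xx > b_j$ and $\av_j \cdot \yy \geq b_j$, so $\av_j \cdot (\mu \xx + (1-\mu)\yy) \to \av_j \cdot \xx > b_j$ as $\mu \to 1^+$; since $I$ is finite, a single $\mu > 1$ makes all these finitely many strict inequalities hold at once. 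Thus $\mu \xx + (1-\mu)\yy \in F_{\xx}$, and Theorem~\ref{relintthm} (applied to $F_\xx$) yields $\xx \in \relint F_{\xx}$. Every point of $P$ therefore lies in the relative interior of some non-empty face, which together with disjointness gives $P = \bigsqcup_{F \in \Fp(P)} \relint F$.

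I expect no serious obstacle here: the whole argument is a bookkeeping exercise built on Theorem~\ref{relintthm}. The only points requiring care are getting the direction of the inequality right when dividing by $1-\mu < 0$ in the lemma, and verifying that a \emph{uniform} $\mu$ can be chosen over the finitely many slack constraints in the covering step — both of which are routine.
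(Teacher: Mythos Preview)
The paper does not actually prove this theorem; it is stated as a basic background result from convex analysis, with a citation to Rockafellar (Theorem~18.2), and no proof is given. So there is nothing to compare against directly.

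That said, your argument is correct and essentially the standard one. The key lemma (if $\xx \in \relint F$ and $\xx \in H$ for a face $H$, then $F \subseteq H$) is proved cleanly via Theorem~\ref{relintthm}, and your handling of the sign when dividing by $1-\mu < 0$ is right. For the covering step, defining $F_{\xx}$ by the tight constraints at $\xx$ and checking $\xx \in \relint F_{\xx}$ via Theorem~\ref{relintthm} is exactly the usual route; your observation that summing the tight inequalities realizes $F_{\xx}$ as a single face is a nice touch. The only cosmetic point: rather than invoking the limit $\mu \to 1^+$, it is slightly crisper to note that for each slack index $j$ the inequality $\av_j \cdot (\mu\xx + (1-\mu)\yy) \geq b_j$ holds on an interval $(1,1+\epsilon_j]$ with $\epsilon_j > 0$ (computed from the slack $\av_j \cdot \xx - b_j$), and then take $\mu = 1 + \min_j \epsilon_j$. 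But this is presentation, not substance.
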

This means that for any  polytope $P$ we can define a map
\[ \carr : P \rTo \Fp(P) \]
which assigns to each point $\xx$ of $P$ its \emph{carrier face} --- the unique face $F$ such that $\xx \in \relint F$.

We can regard $\FF(P)$ as partially ordered by set inclusion. The following result is standard \cite{ziegler1995lectures}.
\begin{theorem}
$\FF(P)$ is a finite lattice. It is atomistic --- every element is the join of the atoms below it --- and coatomistic --- every element is the meet of the coatoms above it.\footnote{In the literature on polytopes, the terms ``atomic'' and ``coatomic'' are used, but these have a different meaning in the lattice theory literature \cite{davey2002introduction}.}
It is graded --- all maximal chains have the same length.
\end{theorem}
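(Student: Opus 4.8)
The plan is to verify the three assertions in turn --- lattice, atomistic/coatomistic, graded --- using the two quoted results (the relative-interior criterion of Theorem~\ref{relintthm} and the decomposition $P=\bigsqcup_{F\in\Fp(P)}\relint F$) together with two standard structural lemmas about faces. Throughout write $V(P)$ for the vertex set of $P$ (its $0$-dimensional faces, equivalently its extreme points) and set $\dim F := \dim\Aff(F)$, with $\dim\emptyset := -1$. First I would record: (L1) every face $F$ of $P$ equals $\Conv(V(P)\cap F)$ --- $F$ is itself a polytope, so it is the convex hull of its own vertices, and any vertex of $F$ is a vertex of $P$ since a proper convex combination of points of $P$ that equals a point of $F$ must have both endpoints in $F$, by validity of the defining inequality; and (L2) a face of $P$ contained in a face $G$ is a face of $G$ (immediate: the same defining inequality works), and every face of $G$ is a face of $P$ --- if $F$ is cut out of $G$ by $\mathbf c\cdot\xx=c_0$ and $G$ out of $P$ by $\mathbf d\cdot\xx=d_0$, then for $N$ large $(\mathbf c+N\mathbf d)\cdot\xx\ge c_0+Nd_0$ is valid on $P$ and tight on exactly $V(P)\cap F$, so by (L1) it defines $F$.

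For the lattice: $P$ and $\emptyset$ are faces (via $\zv\cdot\xx\ge 0$ and $\zv\cdot\xx\ge -1$ respectively), and if $F_1=\{\xx\in P:\av_1\cdot\xx=b_1\}$ and $F_2=\{\xx\in P:\av_2\cdot\xx=b_2\}$ are faces then $F_1\cap F_2$ is the face defined by the valid inequality $(\av_1+\av_2)\cdot\xx\ge b_1+b_2$, which is tight precisely on $F_1\cap F_2$; moreover by (L1) a face is determined by $V(P)\cap F$, so $\FF(P)$ is finite. Thus $\FF(P)$ is a finite meet-semilattice with top $P$, hence a lattice by the observation recalled in Section~3 (the join of a family being the meet of its set of upper bounds). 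For atomistic: by (L1) every non-empty face contains a vertex, and each singleton $\{v\}$ with $v\in V(P)$ is a ($0$-dimensional) face of $P$, so the atoms of $\FF(P)$ are exactly these singletons; and for any non-empty face $F$ the join of the atoms below it is the smallest face containing $V(P)\cap F$, which contains $\Conv(V(P)\cap F)=F$ and is contained in $F$, hence equals $F$ ($\emptyset$ being the empty join).

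The real work is in the coatomistic and graded claims. Using (L2) I would pass to $\Aff(P)$ and assume $P$ full-dimensional in $\Real^n$, and fix an irredundant description $P=\bigcap_{i=1}^m\{\xx:\av_i\cdot\xx\ge b_i\}$, setting $F_i:=\{\xx\in P:\av_i\cdot\xx=b_i\}$. The two facts to establish are: (i) each $F_i$ has dimension $n-1$ --- ``$\le$'' is clear and ``$\ge$'' uses irredundancy, joining a point that satisfies all constraints but the $i$-th strictly to an interior point of $P$ to produce a relative-interior point of $F_i$; and (ii) every proper face $F=\{\xx\in P:\mathbf c\cdot\xx=c_0\}$ equals $\bigcap\{F_i:F\subseteq F_i\}$. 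For (ii) I would take $\xx_0\in\relint F$ and use Theorem~\ref{relintthm} to see $\av_i\cdot\xx_0>b_i$ whenever $F\not\subseteq F_i$; then for any $\yy$ lying in all the $F_i$ that contain $F$, the point $\xx_0+t(\xx_0-\yy)$ stays in $P$ for small $t>0$, which forces $\mathbf c\cdot\yy=c_0$. From (i) and (ii): the coatoms of $\FF(P)$ (the maximal proper faces, $P$ being the top) are exactly the $F_i$ --- any proper face lies in some $F_i$ by (ii), and if $F_i\subseteq G\subsetneq P$ then $\dim G=n-1$, so a relative-interior point of $F_i$ is also one of $G$, forcing $F_i=G$ by disjointness of the relative interiors --- and then (ii) says every face is the meet of the coatoms above it (for $F=\emptyset$ one checks $\bigcap_i F_i=\emptyset$, since otherwise $P$ translated so that a common point of all facets sits at the origin would be a bounded convex cone, hence $\{\zv\}$; the cases $\dim P\le 0$ are immediate). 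That is coatomistic. For graded: in a maximal chain $\emptyset\lessdot F_0\lessdot\dots\lessdot F_k=P$, $F_0$ is an atom, so $\dim F_0=0$; and for $j\ge 1$, by (L2) the faces of $F_j$ are exactly the faces of $P$ inside $F_j$, so $F_{j-1}$ is a maximal proper face of the polytope $F_j$, hence one of its facets, whence $\dim F_{j-1}=\dim F_j-1$ by (i) applied to $F_j$; inductively $\dim F_j=j$, so $k=\dim P$ and all maximal chains have length $\dim P$.

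I expect the main obstacle to be step (ii) --- that a polytope is the intersection of its facets, and that facets have codimension one --- since everything else is essentially bookkeeping with valid inequalities on top of the two quoted theorems. Both halves of (ii) hinge on the geometric observation, extracted from Theorem~\ref{relintthm}, that a relative-interior point of a face is strict in every facet inequality not already containing that face, and on having an irredundant $\HH$-description in hand; getting this, together with the (routine but fiddly) reduction to the full-dimensional case, right is where the proof actually needs care, after which atomistic, coatomistic and graded all drop out formally.
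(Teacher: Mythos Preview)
The paper does not actually prove this theorem: it is stated as a standard result and referenced to Ziegler's \textit{Lectures on Polytopes}, with no argument given. So there is no ``paper's own proof'' to compare against; your write-up is supplying exactly what the paper chose to import.

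That said, your argument is correct and is essentially the standard one that the citation points to. The two structural lemmas (L1) and (L2), the meet-semilattice-with-top route to the lattice structure, the identification of atoms with vertices via (L1), the facet description (i)--(ii) giving coatomisticity, and the dimension-count for gradedness are all the expected ingredients. The one place to be slightly more explicit is in (L2): when you say the new valid inequality is ``tight on exactly $V(P)\cap F$, so by (L1) it defines $F$'', you are implicitly using that $F=\Conv(V(P)\cap F)$, which needs the observation that extreme points of $P$ lying in $F$ are extreme in $F$ and conversely (you note one direction in (L1) but use both here). Similarly, the uniform choice of $N$ is justified by working over the finite vertex set rather than all of $P$; worth a word. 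Otherwise the proof is sound, and since the paper simply quotes the result, your version is strictly more than what the paper provides.
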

Note that meets in $\FF(P)$ are simply given by intersection of faces, while joins $F \vee G$ are defined indirectly, as the intersection of all faces containing both $F$ and $G$.

We call $\FF(P)$ the \emph{face lattice} of $P$. We refer to $\FF(P)$ as the \emph{combinatorial type} of $P$. Two polytopes with isomorphic face lattices are \emph{combinatorially equivalent}.

We say that a polytope $P$ is in \emph{standard form} if $P = \Hpos \cap \Aff(P)$. This means that $P$ is defined by a set of linear equations, together with the non-negativity constraint. It is a standard result of linear programming that every linear program can be put in this form \cite{matousek2007understanding}.
\begin{proposition}
\label{facesfprop}
If $P$ is in standard form, so is every face $F$ of $P$.
\end{proposition}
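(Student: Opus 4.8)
The plan is to reduce the statement to the remark made above that any linear equation holding on a set $S$ also holds on $\Aff(S)$. First I would fix a valid inequality $\av \cdot \xx \geq b$ for $P$ that cuts out the given face, so that $F = \{\xx \in P \mid \av \cdot \xx = b\}$, and then prove the two inclusions making up $F = \Hpos \cap \Aff(F)$.

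The inclusion $F \subseteq \Hpos \cap \Aff(F)$ is immediate, since $F \subseteq P \subseteq \Hpos$ (using that $P$ is in standard form) and $F \subseteq \Aff(F)$ always. For the converse, I would take a point $\xx \in \Hpos \cap \Aff(F)$ and argue in two steps. Since $F \subseteq P$ we have $\Aff(F) \subseteq \Aff(P)$, so $\xx \in \Hpos \cap \Aff(P)$, which equals $P$ by the standing hypothesis; thus $\xx \in P$. Next, the equation $\av \cdot \xx = b$ holds at every point of $F$ and hence, by the remark on affine hulls, at every point of $\Aff(F)$, in particular at $\xx$. Combining these, $\xx \in P$ and $\av \cdot \xx = b$, so $\xx \in F$, which finishes the argument.

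Degenerate faces need no separate treatment: if $F = \emptyset$ then $\Aff(F) = \emptyset$ and both sides vanish, while $F = P$ is just the hypothesis. There is no substantial obstacle here — the only subtlety, and the one step I would be careful to get right, is that one cannot apply the defining equation of $F$ to $\xx$ until one already knows $\xx \in P$, which is why the argument first routes through $\Aff(F) \subseteq \Aff(P)$ to recover membership in $P$ before invoking standard form.
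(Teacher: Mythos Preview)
Your proof is correct and essentially identical to the paper's: fix a defining valid inequality for $F$, get $F \subseteq \Hpos \cap \Aff(F)$ immediately from $F \subseteq P$, and for the reverse inclusion use $\Aff(F) \subseteq \Aff(P)$ to place $\xx$ in $P$, then the preservation of linear equations under affine combinations to get $\av \cdot \xx = b$. The only addition is your remark on the degenerate faces, which the paper omits.
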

\begin{proof}
Let $F =  \{ \xx \in P \mid \av \cdot \xx = b \}$. Since $F \subseteq P$, $F \subseteq \Hpos \cap \Aff(F)$.
Conversely, since $\Aff(F) \subseteq \Aff(P)$, any $\xx \in \Hpos \cap \Aff(F)$ is in $P$, while $\xx \in \Aff(F)$ implies $\av \cdot \xx = b$. Hence $\xx \in F$, as required.
\end{proof}

We can define a map $\supp : \Hpos \rTo \{ 0, 1\}^n$:
\[ (\supp \xx)_i = \left\{ \begin{array}{ll}
0, & \xx_i = 0 \\
1, & \xx_i > 0
\end{array} \right.
\]
We call this the \emph{support} of a non-negative vector. In the case of a probability vector, this gives the usual notion of support of a distribution. In this case, we can also speak of \emph{possibilities}; an outcome is possible if it has positive probability, impossible if it has zero probability.

For a polytope $P$ in standard form, we define $\SL(P) := \{ \supp \xx \mid \xx \in P \}$. Since $\SL(P)$ is a subset of $\Rn$, it inherits the componentwise partial order on vectors. We write $\SL(P)_{\bot}$ for the result of adjoining a least element to this partially ordered set.
We make the following observation.

\begin{proposition}
Let $P$ be a polytope in standard form. If $\zv \in P$, then $P = \{ \zv \}$.
\end{proposition}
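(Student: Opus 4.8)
The plan is to show, in the strong form, that if $P$ contains \emph{any} point other than $\zv$, then $P$ is unbounded, which is impossible since a polytope is by definition a bounded set. So I would assume $\zv \in P$, pick an arbitrary $\xx \in P$, and aim to force $\xx = \zv$.

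First I would note that for every scalar $\lambda \in \Real$ the point $\lambda\xx = \lambda\xx + (1-\lambda)\zv$ is an affine combination of the two points $\xx, \zv \in P$, and hence lies in $\Aff(P)$. (Equivalently: since $\zv \in P \subseteq \Aff(P)$, the affine hull $\Aff(P)$ passes through the origin, so it is a linear subspace and in particular closed under scalar multiplication.) Next, restricting to $\lambda \geq 0$: because $\xx \in P \subseteq \Hpos$ we have $\xx \geq \zv$ componentwise, so $\lambda\xx \geq \zv$, i.e.\ $\lambda\xx \in \Hpos$. Combining the two observations, $\lambda\xx \in \Hpos \cap \Aff(P) = P$ for every $\lambda \geq 0$; that is, $P$ contains the entire ray $\{\lambda\xx \mid \lambda \geq 0\}$.

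Finally, if $\xx \neq \zv$ then some coordinate $\xx_i$ is strictly positive, so $(\lambda\xx)_i = \lambda\xx_i$ is unbounded as $\lambda \to \infty$, contradicting the boundedness of the polytope $P$. Hence $\xx = \zv$, and since $\xx \in P$ was arbitrary, $P = \{\zv\}$. I do not expect any genuine obstacle here; the only things to be careful to record are that $\zv \in P$ places $\zv$ in $\Aff(P)$ (so the scaled points $\lambda\xx$ remain in $\Aff(P)$), and that a polytope is bounded, which is precisely what the ray contradicts.
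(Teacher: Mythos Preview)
Your argument is correct and is essentially the same as the paper's: both observe that $\zv \in P$ forces $\Aff(P)$ to be a linear subspace (the paper phrases this as the defining equations being homogeneous), so for any nonzero $\xx \in P$ the whole nonnegative ray $\{\lambda\xx \mid \lambda \geq 0\}$ lies in $\Hpos \cap \Aff(P) = P$, contradicting boundedness. Your version just spells out the affine-combination step and the coordinate check for unboundedness a bit more explicitly.
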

\begin{proof}
If $\zv \in P$, then the equations defining $\Aff(P)$ must all be homogeneous. If any non-zero $\vv$ is in $P$, the positive half-ray generated by $\vv$ will lie in $P$, contradicting the boundedness of $P$.
\end{proof}

Thus if $\card{P} > 1$, we can define $\SL(P)_{\bot} := \SL(P) \cup \{ \zv \}$ with the componentwise order.
We can define the join of two elements $\uu, \vv$ as the componentwise boolean disjunction (or equivalently, the supremum in $\Real$):
\[ (\uu \vee \vv)_i \; := \; \uu_i \vee \vv_i . \]
This is clearly the join (\ie pairwise supremum) in $\{ 0, 1 \}^n$.

We have the following simple result:
\begin{proposition}
\label{convsuppprop}
Let $P$ be a polytope in standard form, $\xx, \yy \in P$, $0 < \lambda < 1$. Then
\[ \supp (\lambda \xx + (1-\lambda) \yy) \; = \; \supp x \vee \supp y . \]
\end{proposition}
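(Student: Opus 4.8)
The plan is to argue componentwise, exploiting the fact that standard form forces $P \subseteq \Hpos$, so that no cancellation can occur in the convex combination. Fix an index $i$ with $1 \leq i \leq n$ and set $z := \lambda \xx_i + (1-\lambda)\yy_i$, the $i$-th component of $\lambda\xx + (1-\lambda)\yy$. Since $\xx, \yy \in P \subseteq \Hpos$ we have $\xx_i \geq 0$ and $\yy_i \geq 0$, and since $0 < \lambda < 1$ both coefficients $\lambda$ and $1-\lambda$ are strictly positive. A sum of two non-negative reals with strictly positive coefficients is zero precisely when both summands vanish, and is strictly positive otherwise; hence $z = 0$ iff $\xx_i = 0$ and $\yy_i = 0$, while $z > 0$ iff $\xx_i > 0$ or $\yy_i > 0$.

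Translating through the definition of $\supp$, this says exactly that $(\supp(\lambda\xx + (1-\lambda)\yy))_i = 1$ iff $(\supp\xx)_i = 1$ or $(\supp\yy)_i = 1$, which is by definition $(\supp\xx \vee \supp\yy)_i$. As $i$ was arbitrary, the two $0/1$-vectors agree in every component, giving the claimed equality. (Note also that $\lambda\xx + (1-\lambda)\yy \in P$ by convexity of $P$, so the left-hand side is a well-defined element of $\SL(P)$, although the identity itself does not depend on this.)

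There is essentially no obstacle here; the one point worth stressing is that the result genuinely uses non-negativity, i.e. the hypothesis that $P$ is in standard form and hence $P \subseteq \Hpos$. Over arbitrary vectors a convex combination can have a zero component even when both $\xx_i$ and $\yy_i$ are nonzero (they could have opposite signs), so the conclusion would fail; the proof above is exactly the observation that this obstruction cannot arise in the non-negative orthant.
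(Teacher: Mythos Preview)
Your argument is correct; the paper in fact states this proposition without proof, describing it only as a ``simple result'', and your componentwise verification is exactly the routine check the reader is implicitly expected to supply.
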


\begin{proposition}
$\SL(P)_{\bot}$ is a finite lattice.
\end{proposition}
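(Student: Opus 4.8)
The plan is to lean on the order-theoretic fact recalled in Section~3: a \emph{finite} join semilattice is automatically a lattice (its meets are constructed as joins of lower bounds). So it suffices to check that $\SL(P)_{\bot}$ is finite, has a least element, and has a join for every pair of elements.

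Finiteness is immediate, since $\SL(P) \subseteq \{0,1\}^n$ and we have adjoined only one further point. The least element is $\zv$ by construction; here I would note that the preceding proposition (that $\zv \in P$ forces $P = \{\zv\}$, so that when $\card{P} > 1$ we have $\zv \notin P$) is what makes $\zv$ a genuine new bottom, strictly below everything in $\SL(P)$, rather than an element already present.

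For binary joins I would argue that the componentwise boolean disjunction $\uu \vee \vv$ always lies in $\SL(P)_{\bot}$. If one of $\uu,\vv$ equals $\zv$, the disjunction is just the other element, so there is nothing to prove. Otherwise $\uu = \supp \xx$ and $\vv = \supp \yy$ for some $\xx, \yy \in P$; since $P$ is convex, $\tfrac{1}{2}\xx + \tfrac{1}{2}\yy \in P$, and by Proposition~\ref{convsuppprop} its support is exactly $\supp \xx \vee \supp \yy = \uu \vee \vv$. Hence $\uu \vee \vv \in \SL(P) \subseteq \SL(P)_{\bot}$.

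The one step deserving care — and what I would flag as the crux of the argument — is that being the join in the ambient cube $\{0,1\}^n$ is not \emph{a priori} the same as being the join in the sub-poset $\SL(P)_{\bot}$: the latter has fewer elements, so in principle could have a different (smaller) supremum. But this is resolved by a general observation: if the ambient join $\uu \vee \vv$ of two elements of a sub-poset happens to belong to the sub-poset, then it is also their join there, because any upper bound of $\{\uu,\vv\}$ within the sub-poset is in particular an upper bound in the ambient poset and hence dominates $\uu \vee \vv$. Combining this with the previous paragraph, $\uu \vee \vv$ computed componentwise really is the join in $\SL(P)_{\bot}$, so $\SL(P)_{\bot}$ is a finite join semilattice, and therefore a finite lattice.
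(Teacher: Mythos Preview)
Your argument is correct and follows essentially the same route as the paper: use Proposition~\ref{convsuppprop} (via convexity of $P$) to show that $\SL(P)$ is closed under the componentwise boolean join, conclude that $\SL(P)_{\bot}$ is a finite join semilattice, and invoke the Section~3 fact that a finite join semilattice is a lattice. Your version is slightly more explicit than the paper's in handling the case where one argument is $\zv$ and in spelling out why the ambient join in $\{0,1\}^n$ coincides with the join in the sub-poset, but these are refinements of the same proof rather than a different approach.
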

\begin{proof}
The cases where $\card{P} \leq 1$ are trivial. Otherwise, the previous proposition shows that $\SL(P)$ is closed under the join operation.
This makes $\SL(P)_{\bot}$ a join semi-lattice, and since it is finite, it is a lattice.
\end{proof}

\section{Results}

We  fix a polytope $P$ in standard form. 
Given $\xx$ in $P$, we define a vector $\sx$ in $\Rn$:
\[ \sx_i = \left\{ \begin{array}{ll}
0, & \xx_i > 0 \\
1, & \xx_i = 0
\end{array} \right.
\]
Clearly $\sx \cdot \zz \geq 0$ is valid for all $\zz \in P$, and defines a face 
\begin{align*}
\Fx \, =\, & \setdef{\zz \in P}{\sx \cdot \zz = 0}\\
   = \,        & \setdef{\zz \in P}{\supp{\zz} \leq \supp{\xx}} .
\end{align*}

\begin{proposition}
\label{carrprop}
For all $\xx$ in $P$, $\carr \xx = \Fx$.
\end{proposition}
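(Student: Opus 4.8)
The plan is to prove the statement by showing directly that $\xx \in \relint \Fx$. Recall that the map $\carr$ is defined so that $\carr \xx$ is the \emph{unique} non-empty face of $P$ whose relative interior contains $\xx$; since the excerpt has already observed that $\Fx$ is a face of $P$, and since $\xx \in \Fx$ (because $\sx \cdot \xx = \sum_i \sx_i \xx_i = 0$ by the very definition of $\sx$), establishing $\xx \in \relint \Fx$ immediately forces $\carr \xx = \Fx$. So the whole proof reduces to one membership claim about relative interiors.

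To verify $\xx \in \relint \Fx$, I would use the characterization in Theorem~\ref{relintthm}: it suffices to show that for every $\yy \in \Fx$ there exists $\mu > 1$ with $\mu \xx + (1 - \mu)\yy \in \Fx$. The key structural input here is Proposition~\ref{facesfprop}, which tells us that $\Fx$, being a face of the standard-form polytope $P$, is itself in standard form, i.e.\ $\Fx = \Hpos \cap \Aff(\Fx)$. Thus membership of a point in $\Fx$ amounts to exactly two conditions: lying in the affine hull $\Aff(\Fx)$, and being componentwise non-negative.

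Now fix $\yy \in \Fx$ and put $\zz := \mu \xx + (1-\mu)\yy$. For any $\mu$, $\zz$ is an affine combination of the two points $\xx,\yy \in \Fx$, hence $\zz \in \Aff(\Fx)$; so the only thing left to arrange is $\zz \geq \zv$. Writing $\zz_i = \xx_i + (\mu - 1)(\xx_i - \yy_i)$, consider two cases. If $\xx_i = 0$, then $\yy \in \Fx$ gives $\supp \yy \leq \supp \xx$, so $\yy_i = 0$ as well, whence $\zz_i = 0$. For the finitely many indices $i$ with $\xx_i > 0$, we have $\zz_i \to \xx_i > 0$ as $\mu \to 1^{+}$, so a single $\mu > 1$ sufficiently close to $1$ makes all of these $\zz_i$ strictly positive while leaving the others equal to $0$. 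Then $\zz \in \Hpos \cap \Aff(\Fx) = \Fx$, which is precisely what Theorem~\ref{relintthm} demands, and we conclude $\xx \in \relint \Fx$, hence $\carr \xx = \Fx$.

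I do not expect a genuine obstacle in this argument; the only point requiring a little care is that the choice of $\mu$ must be uniform across all coordinates where $\xx_i > 0$, which is unproblematic precisely because there are finitely many of them. The conceptual heart of the proof is the appeal to Proposition~\ref{facesfprop}: it is what lets us reduce ``$\zz \in \Fx$'' to non-negativity plus affineness, and thereby turns the relative-interior condition into the elementary perturbation estimate above.
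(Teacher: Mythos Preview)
Your proposal is correct and follows essentially the same route as the paper: both argue that $\xx \in \relint \Fx$ via Theorem~\ref{relintthm}, reduce membership in $\Fx$ to non-negativity plus lying in $\Aff(\Fx)$ by invoking Proposition~\ref{facesfprop}, and then check non-negativity coordinatewise using $\supp \yy \leq \supp \xx$. The only cosmetic difference is that the paper chooses $\mu$ explicitly (setting $\mu = 1 + \min_i \epsilon_i$ where $\xx_i \geq \epsilon_i \yy_i$ on the positive coordinates), whereas you appeal to the limit $\zz_i \to \xx_i$ as $\mu \to 1^{+}$; these are the same idea.
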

\begin{proof}
We show that $\xx \in \relint \Fx$, using Theorem~\ref{relintthm}.
Given $\zz \in \Fx$, for each $i$ with $\zz_i > 0$, for some $\epsilon_i > 0$, $\xx_i  \geq \epsilon_i \zz_i$.
Let $\epsilon = \min_i \epsilon_i$, $\mu = 1 + \epsilon$.
Then $\vv := \mu \xx + (1-\mu) \zz \geq \zv$, while since $\vv$ is an affine combination of points in $\Fx$, $\vv \in \Aff(\Fx)$. Thus by Proposition~\ref{facesfprop}, $\vv \in \Fx$, as required.
%
\end{proof}

The following is immediate from the definition of $\sx$.
\begin{proposition}
\label{orderequivprop}
For all $\xx, \yy \in P$: 
\[ \Fx \, \subseteq \, \Fy \IFF \supp \xx \leq \supp \yy . \]
\end{proposition}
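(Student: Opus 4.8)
The plan is to read off both implications directly from the alternative description of the carrier face recorded just above Proposition~\ref{carrprop}, namely
\[ \Fx \, = \, \setdef{\zz \in P}{\supp{\zz} \leq \supp{\xx}} , \]
together with the trivial observation that $\xx$ itself always lies in $\Fx$, since $\supp \xx \leq \supp \xx$ by reflexivity of the componentwise order and $\xx \in P$ by hypothesis.

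First I would treat the direction $\supp \xx \leq \supp \yy \IMP \Fx \subseteq \Fy$. Given any $\zz \in \Fx$, the description above gives $\supp \zz \leq \supp \xx$; combining with $\supp \xx \leq \supp \yy$ and transitivity of the componentwise order on $\{0,1\}^n$ yields $\supp \zz \leq \supp \yy$, so $\zz \in \Fy$. Since $\zz$ was arbitrary, $\Fx \subseteq \Fy$. Then I would do the converse $\Fx \subseteq \Fy \IMP \supp \xx \leq \supp \yy$: from $\xx \in \Fx \subseteq \Fy$ and the description of $\Fy$ we immediately get $\supp \xx \leq \supp \yy$. This closes both directions.

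I do not anticipate any real obstacle; the only point to be mindful of is that the argument quietly uses $\xx, \yy \in P$, which is precisely the hypothesis, so that $\Fx$, $\Fy$ are genuinely defined faces and $\xx \in \Fx$. As an alternative one could argue purely from the defining functionals, noting $\sx \cdot \xx = 0$ and that $\sx \leq \yy^{\sigma}$ componentwise is equivalent to $\supp \xx \leq \supp \yy$, but routing through the support description of $\Fx$ is cleaner and is exactly why the statement is ``immediate''.
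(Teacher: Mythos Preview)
Your argument is correct and is exactly the unpacking of what the paper means by ``immediate from the definition of $\sx$'': the support description $\Fx = \setdef{\zz \in P}{\supp \zz \leq \supp \xx}$ together with $\xx \in \Fx$ gives both directions at once. No gap.
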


We shall use a simple general result.

\begin{proposition}
\label{triangprop}
Let $X$ be a set, $Q$ and $R$ posets, and $f : X \repi Q$, $g : X \repi R$ surjective maps. Then the following are equivalent:
\begin{enumerate}
\item For all $x, y \in X$: $f(x) \leq f(y) \IFF g(x) \leq g(y)$.
\item There is a unique order-isomorphism $s : Q \rTo^{\cong}  R$ such that the following diagram commutes:
\begin{diagram}
& & X &  & \\
& \ldepi^{f} & & \rdepi^{g} & \\
Q &  & \rTo_{s}^{\cong}  & & R 
\end{diagram}
\end{enumerate}
\end{proposition}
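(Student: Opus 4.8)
The plan is to establish the two implications separately; essentially all the content is in $(1) \Rightarrow (2)$, since $(2) \Rightarrow (1)$ is immediate. Indeed, if an order-isomorphism $s$ with $s \circ f = g$ exists, then for $x, y \in X$ we have $f(x) \leq f(y)$ iff $s(f(x)) \leq s(f(y))$ (because both $s$ and $s^{-1}$ are monotone) iff $g(x) \leq g(y)$.

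For $(1) \Rightarrow (2)$ I would define $s$ by transport along $f$: given $q \in Q$, pick (using surjectivity of $f$) some $x \in X$ with $f(x) = q$ and set $s(q) := g(x)$. The first thing to check is well-definedness: if $f(x) = f(x')$ then $f(x) \leq f(x')$ and $f(x') \leq f(x)$, so by $(1)$ we get $g(x) \leq g(x')$ and $g(x') \leq g(x)$, and antisymmetry of $R$ forces $g(x) = g(x')$. By construction $s \circ f = g$. Monotonicity of $s$ is again a one-line consequence of $(1)$: if $q \leq q'$, choose representatives $x, x'$, note $f(x) \leq f(x')$, and apply $(1)$ to get $s(q) = g(x) \leq g(x') = s(q')$. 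To see $s$ is an isomorphism, the cleanest route is to run the same construction with the roles of $f$ and $g$ swapped, obtaining a monotone $t \colon R \to Q$ with $t \circ g = f$; then $t \circ s \circ f = f$ together with surjectivity of $f$ gives $t \circ s = \mathrm{id}_Q$, and symmetrically $s \circ t = \mathrm{id}_R$, so $s$ is a monotone bijection with monotone inverse. (Equivalently, one checks directly that $s$ reflects the order, using the $\Leftarrow$ direction of $(1)$, and invokes that a surjective order-reflecting map of posets is an order-isomorphism.) Uniqueness is then forced by commutativity: any $s'$ with $s' \circ f = g$ satisfies $s'(q) = s'(f(x)) = g(x) = s(q)$ whenever $f(x) = q$.

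I do not expect a genuine obstacle. The only delicate point — and the one place where the hypotheses are used at full strength — is well-definedness of $s$, which needs both directions of the biconditional in $(1)$ together with antisymmetry of the target poset; everything else is a routine diagram chase that uses nothing about $Q$ and $R$ beyond their being posets, nor anything about $X$ beyond its being a set.
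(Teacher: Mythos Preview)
Your proof is correct and essentially mirrors the paper's: both define $s$ by lifting $g$ along the surjection $f$, verify well-definedness and monotonicity via the forward direction of (1), and obtain uniqueness from surjectivity of $f$. The only cosmetic difference is that the paper argues directly that $s$ is order-reflecting (your parenthetical alternative) and then surjective via surjectivity of $g$, whereas your primary route builds the inverse $t$ symmetrically; both are equally valid.
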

\begin{proof}
Assume (1). Given $y \in Q$, since $f$ is surjective, $y = f(x)$ for some $x \in X$. Define $s(y) := g(x)$. The forward implication in (1) implies that $f(x) = f(y) \IMP g(x) = g(y)$, so this is well-defined. It also implies that $s$ is monotonic. The converse implication in (1) implies that $s$ is order-reflecting. Hence it is an order-isomorphism onto its image, which is $R$ by surjectivity of $g$. This isomorphism is unique by surjectivity of $f$.

The converse implication is immediate:
\[ f(x) \leq f(y) \IFF s(f(x)) \leq s(f(y)) \IFF g(x) \leq g(y) . \]
\end{proof}

We now come to our main result.

\begin{theorem}
There is an order-isomorphism $\FF(P) \cong \SL(P)_{\bot}$ between the face lattice and the support lattice of $P$, which sends a face to the support of any point in its relative interior.
\end{theorem}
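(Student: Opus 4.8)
The plan is to obtain the isomorphism from the abstract gluing lemma, Proposition~\ref{triangprop}, instantiated at the carrier map and the support map out of $P$, and then to graft on the least elements of the two posets by hand. First I would clear away the degenerate cases $\card{P} \leq 1$, where both $\FF(P)$ and $\SL(P)_{\bot}$ are chains with at most one nontrivial element and the isomorphism is immediate. So assume $\card{P} > 1$. By the earlier observation that $\zv \in P$ forces $P = \{ \zv \}$, we then have $\zv \notin P$, hence $\zv \notin \SL(P)$, so that $\SL(P)_{\bot} = \SL(P) \cup \{ \zv \}$ genuinely adjoins a new least element --- the one that will correspond to the empty face, the least element of $\FF(P)$.

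Next I would apply Proposition~\ref{triangprop} with $X := P$, $Q := \Fp(P)$ the poset of non-empty faces, $R := \SL(P)$, $f := \carr$ and $g := \supp$. Surjectivity of $\supp$ onto $\SL(P)$ is its definition; surjectivity of $\carr$ onto $\Fp(P)$ holds because every non-empty face is itself a non-empty polytope, hence has non-empty relative interior, of which it is the carrier. Condition (1) of Proposition~\ref{triangprop}, namely $\carr \xx \subseteq \carr \yy \IFF \supp \xx \leq \supp \yy$, is exactly Proposition~\ref{carrprop} (identifying $\carr \xx$ with $\Fx$) combined with Proposition~\ref{orderequivprop} ($\Fx \subseteq \Fy \IFF \supp \xx \leq \supp \yy$). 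Proposition~\ref{triangprop} then delivers a unique order-isomorphism $s : \Fp(P) \to \SL(P)$ with $s \circ \carr = \supp$; unwinding this, for a non-empty face $F$ and any $\xx \in \relint F$ we have $\carr \xx = F$ and therefore $s(F) = \supp \xx$, so $s$ sends $F$ to the support of an arbitrary point of its relative interior, well-defined independently of the chosen point.

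Finally I would extend $s$ to $\hat{s} : \FF(P) \to \SL(P)_{\bot}$ by decreeing $\hat{s}(\emptyset) := \zv$. Since $\emptyset$ is the bottom of $\FF(P)$ and $\zv$ the bottom of $\SL(P)_{\bot}$, while $s$ is already an order-isomorphism between the two posets with these bottoms removed, $\hat{s}$ is an order-isomorphism; on the empty face the description ``support of a point in the relative interior'' holds vacuously, since $\relint \emptyset = \emptyset$. As an order-isomorphism automatically preserves whatever meets and joins exist on either side, this simultaneously recovers the statement as an isomorphism of lattices.

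I do not anticipate a real obstacle here: the substance has been pushed into Propositions~\ref{carrprop}, \ref{orderequivprop} and \ref{triangprop}, and what remains is bookkeeping. The only point demanding care is the handling of the least elements --- confirming that in the non-degenerate case $\zv$ is absent from $\SL(P)$, so that pairing $\emptyset$ with $\zv$ extends $s$ to a bijection without collapsing anything.
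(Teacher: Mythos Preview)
Your proposal is correct and follows essentially the same route as the paper: apply Proposition~\ref{triangprop} to the span $\carr : P \repi \Fp(P)$, $\supp : P \repi \SL(P)$, using Propositions~\ref{carrprop} and~\ref{orderequivprop} to verify condition (1), and then extend the resulting isomorphism $\Fp(P) \cong \SL(P)$ by matching the empty face with the adjoined bottom element. You supply more of the bookkeeping explicitly (the degenerate cases, surjectivity of $\carr$, and the check that $\zv \notin \SL(P)$ so the extension is genuinely a bijection), but the architecture is identical to the paper's proof.
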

\begin{proof}
We apply Propositions~\ref{triangprop}, \ref{carrprop} and \ref{orderequivprop}
 to the diagram
\begin{diagram}
& & P &  & \\
& \ldepi^{\carr} & & \rdepi^{\supp} & \\
\Fp(P) &  &   & & \SL(P)
\end{diagram}
This gives an order-isomorphism $\Fp(P) \cong \SL(P)$, which extends to a lattice isomorphism $\FF(P) \cong \SL(P)_{\bot}$ by mapping the empty face to the bottom element of $\SL(P)_{\bot}$.
\end{proof}

We now draw some corollaries of this result.

\begin{corollary}
\begin{enumerate}
\item Two points $\xx$, $\yy$ of $P$ have the same support if and only if they are in the relative interior of the same face.
\item The vertices of $P$ are exactly those points with minimal support.
\item A point $\xx$ of $P$ is a vertex if and only if it is the unique point of $P$ with support $\supp \xx$.
\end{enumerate}
\end{corollary}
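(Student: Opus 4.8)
The plan is to derive all three statements as direct consequences of the main theorem, namely the order-isomorphism $\FF(P) \cong \SL(P)_{\bot}$ that sends a face $F$ to the support $\supp \xx$ of any point $\xx \in \relint F$, together with the carrier decomposition $P = \bigsqcup_{F \in \Fp(P)} \relint F$ and Proposition~\ref{carrprop}, which identifies the carrier face of $\xx$ with $\Fx$.

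For part~(1), I would argue as follows. By the carrier decomposition, every point $\xx \in P$ lies in the relative interior of exactly one non-empty face, namely $\carr \xx = \Fx$. So $\xx$ and $\yy$ lie in the relative interior of the same face if and only if $\Fx = \Fy$. By Proposition~\ref{orderequivprop}, $\Fx = \Fy$ is equivalent to $\supp \xx \leq \supp \yy$ and $\supp \yy \leq \supp \xx$, i.e.\ to $\supp \xx = \supp \yy$. This gives the claim; alternatively one can invoke the isomorphism $\carr$ and $\supp$ sitting over the same set $P$ as in the proof of the main theorem.

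For part~(2), recall that the vertices of $P$ are exactly the zero-dimensional non-empty faces, equivalently the minimal elements of $\Fp(P)$ under inclusion. Under the order-isomorphism $\Fp(P) \cong \SL(P)$, minimal elements correspond to minimal elements, so the vertices correspond precisely to the minimal supports in $\SL(P)$. Concretely: a point $\xx$ is a vertex iff $\Fx = \{\xx\}$ iff $\Fx$ is minimal in $\Fp(P)$ iff $\supp \xx$ is minimal in $\SL(P)$; and one should note that every point $\yy \in P$ with $\supp \yy$ minimal satisfies $\yy \in \relint \Fy$ with $\Fy$ a minimal face, hence a vertex, so the correspondence is exact and covers all points of minimal support.

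For part~(3), the forward direction: if $\xx$ is a vertex then $\Fx = \{\xx\}$ is a face, and by part~(1) any $\yy$ with $\supp \yy = \supp \xx$ lies in $\relint \Fx = \{\xx\}$, so $\yy = \xx$. For the converse, suppose $\xx$ is the unique point of $P$ with support $\supp \xx$; then $\relint \Fx = \{\xx\}$ by part~(1), so $\Fx$ is a face whose relative interior is a single point, forcing $\Fx$ to be $0$-dimensional, i.e.\ $\xx$ is a vertex. I do not anticipate a real obstacle here: the content is entirely in the main theorem and the carrier decomposition, and the only point requiring a little care is making sure, in parts~(2) and~(3), that ``minimal support'' is matched with ``vertex'' in both directions rather than just noting that vertices have minimal support — the uniqueness clause in part~(3) is exactly what pins this down.
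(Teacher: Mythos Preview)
Your proposal is correct and follows essentially the same approach as the paper's proof. The paper argues more tersely---it says (1) follows from the definition of the isomorphism, derives (2) by noting that atoms of $\FF(P)$ (vertices) correspond under the isomorphism to atoms of $\SL(P)$ (minimal supports), and for (3) observes that vertices are exactly the points that are the unique elements of their carrier faces and then appeals to (1)---but the underlying logic is identical to what you have written out in more detail.
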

\begin{proof}
(1) follows immediately from the definition of the order-isomorphism $s$. The atoms of $\FF(P)$ are the 0-dimensional faces $\{ \vv \}$ corresponding to the vertices of $P$; these are mapped bijectively by $s$ to the atoms of $\SL(P)$, which are the minimal supports, which yields (2).
Since the vertices of $P$ are exactly those points which are the unique elements of their carrier faces, (3) follows from (1).
\end{proof}

One interesting observation about our result is that we have shown an isomorphism between two lattices whose concrete presentations look rather different:
\begin{itemize}
\item For $\FF(P)$, meet is given simply by intersection, while join is defined indirectly, as the intersection of all upper bounds.
\item For $\SL(P)_{\bot}$, join is simply defined pointwise, while meet is defined indirectly, as the supremum of all lower bounds.
\end{itemize}

\section{Application to Probability Polytopes}

Let $\Sigma = (X, \MM, O)$ be a measurement scenario, and $\NS(\Sigma)$ the set of no-signalling models over $\Sg$.
As explained in Section~\ref{empsec}, we can view empirical models as vectors in $\Rn$, where $n$ depends on the measurement scenario.
Thus we can view $\NS(\Sigma)$ as a subset of $\Rn$. The points of $\NS(\Sg)$ are those vectors of non-negative numbers satisfying the normalization equations, one for each $C \in \MM$:
\[ \sum_{s \in O^C} e_C(s) = 1 \]
and the no-signalling equations, one for each pair $C, C' \in \MM$ and $s \in O^{C \cap C'}$:
\[ \sum_{t \in O^C, \; t |_{C \cap C'} = s} e_C(t) \; = \;  \sum_{t' \in O^{C'}, \; t' |_{C \cap C'} = s} e_C(t') . \]
We have written these in the notation of empirical models on measurement scenarios, but of course they can be ``flattened out'' into linear equations on the corresponding vectors.
Thus $\NS(\Sg)$ is specified by a set of linear equations, together with the non-negativity constraints. Hence $\NS(\Sg)$ is a polytope in standard form, and the results of the previous section apply.

We restate the results explicitly for $\NS(\Sg)$.

\begin{theorem}
\label{mainnsthm}
Let $\Sg$ be any measurement scenario.
There is an order-isomorphism 
\[ \FF(\NS(\Sg)) \cong \SL(\NS(\Sg))_{\bot} \]
between the face lattice and the support lattice of $\NS(\Sg)$.
\end{theorem}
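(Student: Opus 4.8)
The plan is to obtain this statement as an immediate instance of the main theorem of Section~5, which gives an order-isomorphism $\FF(P) \cong \SL(P)_{\bot}$ for \emph{any} polytope $P$ in standard form. Since all the substantive work --- Propositions~\ref{carrprop}, \ref{orderequivprop}, \ref{triangprop} and the main theorem itself --- has already been carried out at that level of generality, the only thing left to check is that $\NS(\Sg)$ really is a polytope in standard form; once that is in hand one simply specialises the earlier theorem to $P := \NS(\Sg)$.

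First I would record that $\NS(\Sg)$ is a bounded subset of $\Rn$: for each context $C \in \MM$ and each assignment $s \in O^C$ the coordinate $e_C(s)$ is non-negative, and the normalization equation for $C$ forces $e_C(s) \leq 1$, so $\NS(\Sg) \subseteq [0,1]^n$. Combined with the fact that $\NS(\Sg)$ is the solution set of finitely many linear equations (normalization, one per context; and no-signalling, one per pair of contexts and overlap assignment) intersected with the non-negative orthant $\Hpos$ --- each equation contributing two closed half-spaces and each non-negativity constraint one more --- this exhibits $\NS(\Sg)$ as a bounded intersection of finitely many closed half-spaces, hence as a polytope by the Fundamental Theorem.

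Next I would verify the standard-form identity $\NS(\Sg) = \Hpos \cap \Aff(\NS(\Sg))$. Writing $A$ for the affine subspace of $\Rn$ cut out by the normalization and no-signalling equations, we have $\NS(\Sg) = \Hpos \cap A$ by definition. Since every one of these equations is linear and holds on all of $\NS(\Sg)$, we get $\Aff(\NS(\Sg)) \subseteq A$, so $\Hpos \cap \Aff(\NS(\Sg)) \subseteq \Hpos \cap A = \NS(\Sg)$; conversely $\NS(\Sg) \subseteq \Hpos$ and $\NS(\Sg) \subseteq \Aff(\NS(\Sg))$ trivially, giving the reverse inclusion. Thus $\NS(\Sg)$ is in standard form --- an instance of the general observation, already noted in the text, that any bounded set defined by linear equations together with non-negativity constraints is automatically in standard form.

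With $\NS(\Sg)$ now known to be a polytope in standard form, I would conclude by applying the main theorem of Section~5 to obtain $\FF(\NS(\Sg)) \cong \SL(\NS(\Sg))_{\bot}$, the isomorphism again sending a face to the support of any point in its relative interior. I do not expect any genuine obstacle: the only place where the specific shape of the defining equations (rather than merely their linearity) is used is the boundedness argument, so that is the step deserving the most care, but it is routine.
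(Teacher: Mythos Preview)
Your proposal is correct and matches the paper's approach exactly: the paper states the theorem as an immediate specialisation of the general result of Section~5, having first observed in the opening paragraph of Section~6 that $\NS(\Sg)$ is defined by linear equations together with non-negativity constraints and is therefore a polytope in standard form. Your write-up simply spells out the boundedness and standard-form verifications in slightly more detail than the paper does.
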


The same result will hold for any polytope of probability models which can be put in standard form.

\begin{corollary}
\begin{enumerate}
\item Two empirical models in $\NS(\Sg)$ have the same support if and only if they are in the relative interior of the same face.
\item The vertices of $\NS(\Sg)$ are exactly those no-signalling empirical models with minimal support.
\item A no-signalling empirical model $e$ is a vertex if and only if it is the unique model in $\NS(\Sg)$ with support $\supp e$.
\item The empirical models in the relative interior of $\NS(\Sg)$ are exactly those with full support.
\end{enumerate}
\end{corollary}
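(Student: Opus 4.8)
The plan is to read all four clauses off the machinery already established, since $\NS(\Sg)$ has just been shown to be a polytope in standard form, and to supply one small extra ingredient for clause~(4). Concretely, I would first invoke Theorem~\ref{mainnsthm} (and the order-isomorphism $s$ underlying it) together with the general Corollary of Section~5, instantiated at $P = \NS(\Sg)$. Clause~(1) is then literally the first clause of that Corollary; clause~(2) is the second (vertices $=$ atoms of $\FF(P)$ $\leftrightarrow$ atoms of $\SL(P)$ $=$ minimal supports); clause~(3) is the third (a vertex is the unique point of its carrier face, hence by clause~(1) the unique point with its support). No new argument is needed for these.

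The only genuinely new statement is clause~(4). Here I would argue: $\NS(\Sg)$ is itself an (improper) face of $\NS(\Sg)$ — take the valid inequality $\zv \cdot \xx \ge 0$ — so by Proposition~\ref{carrprop} a model $e$ lies in $\relint \NS(\Sg)$ if and only if $F_e = \NS(\Sg)$, i.e.\ if and only if $\supp e' \le \supp e$ for every $e' \in \NS(\Sg)$. Equivalently, $e \in \relint \NS(\Sg)$ exactly when $\supp e$ is the top element of $\SL(\NS(\Sg))$. It therefore suffices to check that this top element is the all-ones vector, i.e.\ that some no-signalling model has full support. The clean witness is the uniform model $u$, given by $u_C(s) = |O|^{-|C|}$ for every $C \in \MM$ and every $s \in O^C$: marginalizing the uniform distribution on $O^C$ onto any $U \subseteq C$ yields the uniform distribution on $O^U$, so the normalization and no-signalling equations all hold, and every component of $u$ is positive. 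Hence $(1,\dots,1) \in \SL(\NS(\Sg))$, and being maximal in $\{0,1\}^n$ it is the top of $\SL(\NS(\Sg))$; therefore $e \in \relint \NS(\Sg)$ iff $\supp e = (1,\dots,1)$ iff $e$ has full support. (Alternatively, bypassing the improper face: $u$ satisfies $F_u = \NS(\Sg)$, so $u \in \relint \NS(\Sg)$, and then clause~(1) forces every model in $\relint \NS(\Sg)$ to share its support with $u$, namely full support; conversely a full-support model has $F_e = \NS(\Sg)$, hence lies in the relative interior.)

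I do not expect any real obstacle: clauses~(1)--(3) are pure bookkeeping on top of the Section~5 results, and the sole substantive point in~(4) — existence of a full-support no-signalling model — is dispatched by the uniform model (or, if a more operational witness is preferred, by the uniform mixture of all local deterministic models, which also shows that the barycentre of $\NS(\Sg)$ has full support). The one thing to be careful about is keeping the order-theoretic identifications straight, in particular that ``top of $\SL(\NS(\Sg))$'' really is the componentwise maximum in $\{0,1\}^n$ rather than merely some maximal support; this is immediate once a full-support model is exhibited.
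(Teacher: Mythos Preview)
Your proposal is correct. The paper actually states this corollary without proof, treating it as an immediate restatement for $\NS(\Sg)$ of the general Corollary in Section~5; your handling of clauses~(1)--(3) matches this exactly. Your treatment of clause~(4) is more careful than the paper's: you correctly note that the order-isomorphism only tells you that $\relint \NS(\Sg)$ corresponds to the top element of $\SL(\NS(\Sg))$, and that one still needs to check this top element is the all-ones vector rather than some strictly smaller support. Exhibiting the uniform model as a full-support no-signalling model is the natural way to close this small gap, which the paper leaves implicit.
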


We also have the following result specific to the no-signalling case.
\begin{proposition}
The vertices of $\NS(\Sg)$ can be written as a disjoint union
\[ V(\Sg) \; = \; \LD(\Sg) \,  \sqcup \, \MSC(\Sg) \]
of the local deterministic models $\LD(\Sg)$,  and $\MSC(\Sg)$, the strongly contextual models with minimal support.
The polytope $\LL(\Sg)$ of models realized by local hidden variables is given by  the convex hull $\Conv (\LD(\Sg))$, while every strongly contextual model over $\Sg$ is a convex combination of vertices in $\MSC(\Sg)$, and belongs to a face of $\NS(\Sg)$ containing only strongly contextual models.
\end{proposition}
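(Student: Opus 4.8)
The plan is to exploit two things already in hand: the Corollary to Theorem~\ref{mainnsthm}, by which the vertices of $\NS(\Sg)$ are exactly the no-signalling models of minimal support, and the description of carrier faces in Proposition~\ref{carrprop}, namely $\carr e = \{ z \in \NS(\Sg) \mid \supp z \leq \supp e \}$.

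First I would pin down the two families. For a global assignment $s : X \to O$, let $\delta_s$ denote the local deterministic model with $(\delta_s)_C$ the point distribution at $s|_C$; this is manifestly no-signalling, and its support is a single joint outcome in each context. Since the support of any empirical model is non-empty in every context, $\supp \delta_s$ is minimal in $\SL(\NS(\Sg))$, so $\delta_s$ is a vertex; thus $\LD(\Sg) \subseteq V(\Sg)$, and also $\MSC(\Sg) \subseteq V(\Sg)$, since its members have minimal support by definition. Conversely, let $v$ be a vertex, hence of minimal support. If no global assignment is consistent with $\supp v$, then $v$ is strongly contextual, so $v \in \MSC(\Sg)$. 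If some global assignment $s$ is consistent with $\supp v$, then $\delta_s \in \NS(\Sg)$ with $\supp \delta_s \leq \supp v$; minimality forces $\supp \delta_s = \supp v$, so $v_C$ is supported on the single outcome $s|_C$ for every $C$, i.e.\ $v = \delta_s \in \LD(\Sg)$. The two families are disjoint, since $s$ itself witnesses that $\delta_s$ is not strongly contextual. This establishes $V(\Sg) = \LD(\Sg) \sqcup \MSC(\Sg)$.

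Next I would invoke the standard identification --- Fine's theorem in the sheaf-theoretic formulation of \cite{AbramskyBrandenburger} --- of the models admitting a local hidden-variable realization with the convex combinations of local deterministic models, which gives $\LL(\Sg) = \Conv(\LD(\Sg))$ directly. Finally, for a strongly contextual $e$, I would take its carrier face $F := \carr e = \{ z \in \NS(\Sg) \mid \supp z \leq \supp e \}$. Every $z \in F$ is itself strongly contextual: a global assignment $s$ consistent with $\supp z$ would satisfy $s|_C \in \supp z_C \subseteq \supp e_C$ for all $C$, contradicting strong contextuality of $e$. Since $e \in \relint F \subseteq F$, this is the desired face; and writing $e$ as a convex combination, with positive coefficients, of the vertices of $F$ --- which are vertices of $\NS(\Sg)$ lying in $F$, hence strongly contextual and of minimal support, i.e.\ elements of $\MSC(\Sg)$ --- exhibits $e$ as a convex combination of members of $\MSC(\Sg)$. (Alternatively: in any convex decomposition $e = \sum_i \lambda_i v_i$ into vertices with $\lambda_i > 0$, Proposition~\ref{convsuppprop} gives $\supp v_i \leq \supp e$, whence each $v_i$ is strongly contextual by the same argument.)

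The routine verifications --- that $\delta_s$ is no-signalling, and that a distribution with one-point support is the corresponding point mass --- are immediate. The step carrying the real content is the second case of the vertex analysis, where minimality of the support must be used to force $v$ to coincide with $\delta_s$ rather than merely to dominate it; and the identification $\LL(\Sg) = \Conv(\LD(\Sg))$ is a known fact about the framework rather than something this section proves.
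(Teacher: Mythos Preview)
Your proof is correct, and in fact more complete than the paper's, but it takes a different route on the two main points.

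The paper's proof leans on a single external result, \cite{AbramskyBrandenburger}[Proposition~6.3]: a model is strongly contextual if and only if it admits no convex decomposition with a local component. Applied to a vertex $v$ (whose only convex decomposition is trivial), this says $v$ is strongly contextual iff $v \notin \LD(\Sg)$, yielding the dichotomy $V(\Sg) = \LD(\Sg) \sqcup \MSC(\Sg)$ at once; applied to an arbitrary strongly contextual $e$, it forces every vertex in any convex decomposition of $e$ to be non-local, hence in $\MSC(\Sg)$. The carrier-face argument is then used only for the final clause, exactly as you use it.

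You instead prove the vertex dichotomy directly from the definition of strong contextuality together with the minimal-support characterisation of vertices, by comparing $\supp v$ with $\supp \delta_s$; and you obtain the convex decomposition of a strongly contextual $e$ into $\MSC(\Sg)$-vertices from the carrier face itself, observing that the vertices of $F = \carr e$ are vertices of $\NS(\Sg)$ lying in $F$ and hence strongly contextual with minimal support. This avoids invoking Proposition~6.3 altogether and keeps the argument internal to the support machinery developed in this section; the price is a slightly longer proof, while the paper's version is terser but imports more from \cite{AbramskyBrandenburger}. The carrier-face step for the last clause is identical in both.
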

\begin{proof}
By \cite{AbramskyBrandenburger}[Proposition 6.3], a model is strongly contextual if and only if it has no convex decomposition with a local model. Thus a strongly contextual model $e$ must be in the polytope $\Conv(\MSC(\Sg))$. Let $F = \carr e$. Any model $e' \in F$ has $\supp e' \leq \supp e$ by Theorem~\ref{mainnsthm}; hence the strong contextuality of $e$ implies that of $e'$.
\end{proof}

In these cases of probabilistic models, the supports have their usual interpretation as supports of probability distributions. Moreover, they acquire a conceptual significance as expressing \emph{possibilistic} information.

We can also view possibilistic models as well-motivated objects in their own right \cite{Abramsky:RelationalHiddenVariables,AbramskyBrandenburger,MansfieldFritz11:Hardy}. Boolean logic for combining possibilities replaces arithmetic formulas for calculating with probabilities.

In more precise terms, we use the  notion of \emph{commutative semiring}, which is an algebraic structure $(R, {+}, 0, {\cdot}, 1)$, where $(R, {+}, 0)$ and $(R, {\cdot}, 1)$ are commutative monoids satisfying the distributive law:
\[ a \cdot (b + c) = a\cdot b + a \cdot c . \]
Examples include the non-negative reals $\Rgeq$ with the usual addition and multiplication, and the booleans $\Bool = \{ 0, 1 \}$ with disjunction and conjunction playing the r\^oles of addition and multiplication respectively.

Now an $R$-distribution on a finite set $X$ is given by a map
\[ d : X \rTo R \]
satisfying the normalization condition
\[ \sum_{x \in X} d(x) \; = \; 1 . \]
In the case of $R = \Rgeq$, we recover the usual notion of probability distribution, while in the boolean case, a distribution is simply the characteristic function of a non-empty subset.

We can define marginalization of distributions over any commutative semiring $R$, exactly we did in the usual probabilistic case, and hence obtain a notion of no-signalling empirical model over a measurement scenario $\Sg$ with respect to $R$. We write $\NS(\Sg, R)$ for the set of all such models. In the case of $R = \Rgeq$, we recover the notion of probabilistic no-signalling model we have already seen, while in the boolean case, we obtain a notion of \emph{possibilistic no-signalling empirical model}.

We can now re-interpret the support map as a \emph{possibilistic collapse}, which maps a probabilistic model to a possibilistic one.

Some obvious questions arise:
\begin{itemize}
\item Are probabilistic no-signalling models mapped to possibilistic ones?
\item Does every possibilistic no-signalling model arise as the possibilistic collapse of a probabilistic no-signalling model?
\end{itemize}

These are answered by the results from \cite{Abramsky:RelationalHiddenVariables,AbramskyBrandenburger}, which show that the answer to the first question is positive, while the answer to the second is negative.
The first point follows from the fact that there is a (unique) semiring homomorphism $h : \Rgeq \rTo \{ 0, 1 \}$, which when applied pointwise to an empirical model yields the possibilistic collapse. The fact that $h$ is a homomorphism means that the no-signalling equations are preserved.
For the second point, we  give an explicit counter-example in \cite{Abramsky:RelationalHiddenVariables}[Proposition 9.1].

In the light of these results, we have the following situation. The support map can be viewed as the possibilistic collapse
\[ \poss : \NS(\Sg, \Rgeq) \rTo \NS(\Sg, \Bool) . \]
We know that this map is not surjective. 

We pose the following open question.
\begin{question}
Can we give an \emph{intrinsic characterization} of the image of the possibilistic collapse map, using only possibilistic notions?
\end{question}

%
%

It is tempting to attempt to answer to this question by conjecturing that the minimal models in $\NS(\Sg,\Bool)$ are in the image of the possibilistic collapse, and hence form the atoms of $\SL(\NS(\Sg))$. Since $\SL(\NS(\Sg))$ is atomistic, this would mean that the supports of the probabilistic models can all be expressed as joins of the minimal possibilistic models. This would provide a complete determination of the combinatorial structure of the no-signalling polytope by purely possibilistic information. The situation turns out to be somewhat more complicated, however, as we can demonstrate with the following possibilistic empirical model, which provides a counter-example to the tempting conjecture.

We shall now describe a simple example of a minimal possibilistic model $s$ which does not arise as the support of any probabilistic model; i.e.~such that there exists no probabilistic model $e$ for which $\mathsf{poss}(e) = s$. Note that this also implies that there cannot be any probabilistic model with strictly smaller support, since otherwise its possibilistic collapse would contradict the minimality of $s$.

The measurement scenario is described as follows.
\begin{align*}
X &= \{A,B,C,D\} \\
\mathcal{M} &= \left\{ \{A,B\}, \{A,C\}, \{A,D\}, \{B,C\}, \{B,D\}, \{C,D\} \right\} \\
O &= \{0,1,2\}
\end{align*}
The possibilistic model $s$ is then defined by the following possible sections whose coefficients we label $a,b,\dots,o$.
\begin{equation*}
\begin{array}{lcccc}
AB & \mapsto & 00, & 10, & 21 \\
& & a & b & c \\
AC & \mapsto & 00, & 11, & 21 \\
& & d & e & f \\
AD & \mapsto & 01, & 10, & 21 \\
& & k & l & m \\
BC & \mapsto & 00, & 11 & \\
& & g & h & \\
BD & \mapsto & 00, & 11 & \\
& & i & j & \\
CD & \mapsto & 01, & 10 & \\
& & n & o & \\
\end{array}
\end{equation*}
The model is also depicted in the bundle diagram of figure \ref{fig:model}.
What we see at the bottom in this representation is the \emph{base space} of the variables in $X$. There is an edge between two variables when they can be measured together.
Above each vertex  is a \emph{fibre} of those values which can be assigned to the variable. We represent the values in the order $0$, $1$, $2$. There is an edge between values in adjacent fibres precisely when the corresponding \emph{joint outcome} is possible. For example, in the context $AD$ the joint outcome $01$ is possible, so there is an edge connecting these values in the respective fibres above $A$ and $D$; while $00$ is not possible, so there is no corresponding edge.
For further examples of how empirical models are represented as bundle diagrams, see \cite{AbramskyEtAl:ContextualityCohomologyAndParadox}.

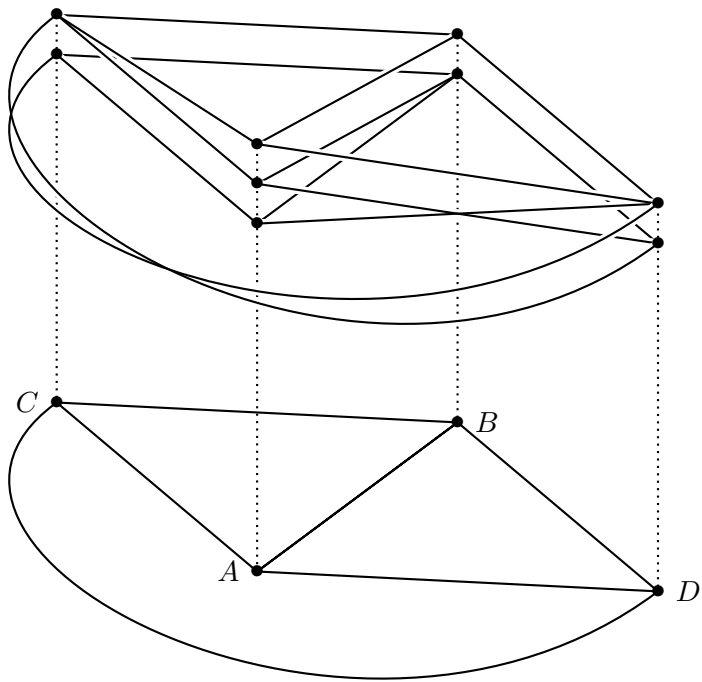
\begin{figure}
\caption{\label{fig:model} The possibilistic model $s$.}
\begin{center}
\begin{tikzpicture}[x=50pt,y=50pt,thick,label distance=-0.25em,baseline=(current bounding box.center),scale=.75]

\node (e) at (4,-0.2) {};
\node (n) at (2,1.5) {};
\node (T) at (0,3.5) {};
\node (u) at (0,0.4) {};

\node [inner sep=0em] (a) at (0,0) {};
\node [inner sep=0em] (b) at ($ (a) + (n) $) {};
\node [inner sep=0em] (c) at ($ (a) + (n) - (e) $) {};
\node [inner sep=0em] (d) at ($ (a) + (e) $) {};
\node [inner sep=0em] (c') at ($ (a) - (e) $) {};
\node [inner sep=0em] (d') at ($ (a) + (e) - 1.5*(n) $) {};

\node [inner sep=0em] (a0) at ($ (a) + (T) $) {};
\node [inner sep=0em] (a1) at ($ (a0) + (u) $) {};
\node [inner sep=0em] (a2) at ($ (a0) + 2*(u) $) {};
\node [inner sep=0em] (b0) at ($ (b) + (T) $) {};
\node [inner sep=0em] (b1) at ($ (b0) + (u) $) {};
\node [inner sep=0em] (c0) at ($ (c) + (T) $) {};
\node [inner sep=0em] (c1) at ($ (c0) + (u) $) {};
\node [inner sep=0em] (d0) at ($ (d) + (T) $) {};
\node [inner sep=0em] (d1) at ($ (d0) + (u) $) {};
\node [inner sep=0em] (c'0) at ($ (c') + (T) $) {};
\node [inner sep=0em] (c'1) at ($ (c'0) + (u) $) {};
\node [inner sep=0em] (d'0) at ($ (d') + (T) $) {};
\node [inner sep=0em] (d'1) at ($ (d'0) + (u) $) {};

\draw (a) -- (b) -- (c) -- (a) -- (d) -- (b) -- (a);
\draw (c) .. controls (c') and (d') .. (d);

\draw [dotted] (a2) -- (a);
\draw [dotted] (b1) -- (b);
\draw [dotted] (c1) -- (c);
\draw [dotted] (d1) -- (d);

\node [inner sep=0.2em,label=left:{$A$}] at (a) {$\bullet$};
\node [inner sep=0.2em,label=right:{$B$}] at (b) {$\bullet$};
\node [inner sep=0.2em,label=left:{$C$}] at (c) {$\bullet$};
\node [inner sep=0.2em,label=right:{$D$}] at (d) {$\bullet$};

\draw (b0) -- (c0);
\draw (b1) -- (c1);

\draw [line width=3.2pt,white] (c0) -- (a0);
\draw [line width=3.2pt,white] (c1) -- (a1);
\draw [line width=3.2pt,white] (c1) -- (a2);

\draw (c0) -- (a0);
\draw (c1) -- (a1);
\draw (c1) -- (a2);

\draw (b0) -- (d0);
\draw (b1) -- (d1);

\draw [line width=3.2pt,white] (a0) -- (b0);
\draw [line width=3.2pt,white] (a1) -- (b0);
\draw [line width=3.2pt,white] (a2) -- (b1);

\draw (a0) -- (b0);
\draw (a1) -- (b0);
\draw (a2) -- (b1);

\draw [line width=3.2pt,white] (a0) -- (d1);
\draw [line width=3.2pt,white] (a1) -- (d0);
\draw [line width=3.2pt,white] (a2) -- (d1);

\draw (a0) -- (d1);
\draw (a1) -- (d0);
\draw (a2) -- (d1);

\draw [line width=3.2pt,white] (c0) .. controls (c'0) and (d'1) .. (d1);
\draw [line width=3.2pt,white] (c1) .. controls (c'1) and (d'0) .. (d0);

\draw (c0) .. controls (c'0) and (d'1) .. (d1);
\draw (c1) .. controls (c'1) and (d'0) .. (d0);

\node [inner sep=0.2em] at (a0) {$\bullet$};
\node [inner sep=0.2em] at (a1) {$\bullet$};
\node [inner sep=0.2em] at (a2) {$\bullet$};
\node [inner sep=0.2em] at (b0) {$\bullet$};
\node [inner sep=0.2em] at (b1) {$\bullet$};
\node [inner sep=0.2em] at (c0) {$\bullet$};
\node [inner sep=0.2em] at (c1) {$\bullet$};
\node [inner sep=0.2em] at (d0) {$\bullet$};
\node [inner sep=0.2em] at (d1) {$\bullet$};

\end{tikzpicture}
\end{center}
\end{figure}

We first observe that no-signalling requires all coefficients to be equated.
For example, the no-signalling conditions for the overlapping contexts $AB$, $AC$ translate into the equations
\[ a = d, \quad b = e, \quad c = f. \]
Continuing in this fashion, we obtain the following equations:
\[ \begin{array}{llllllll}
a = k, & b = l, &  g = i, & h = j, & c = n, & d = k, & e = l, & f = m \\
c = h, & h = o, & g = n, & i = o, & j = n, & c = j, & l = o, & d = n.
\end{array}
\]
These equations imply that all the coefficients must be equated.
Minimality of the model can be deduced from the fact that if the coefficient for any of the possible sections is set to zero then consistency requires that all coefficients must be set to zero.

Having equated all coefficients, the only remaining non-trivial equation is
\begin{equation*}
a = a + a.
\end{equation*}
Since this arises from a possibilistic model, it evidently has a non-zero solution in the booleans, namely $a = 1$. Observe, however, that there exists no non-zero, real solution. This implies that there exists no probabilistic model $e$ such that $\mathsf{supp}(e) = s$.

Further, we note that such possibilistic models can also arise in  ``Bell-type'' measurement scenarios of the kind described in the first example in Section~2. In \cite{MansfieldBarbosa:QPL2013} a method was introduced for constructing Bell-type models from models of more general kinds such that the degree of contextuality is preserved. Under this construction, the model $s$ described above yields a model $s_{\text{Bell}}$ with two agents, each of which have four measurement settings, with each measurement having three possible outcomes, which also has the property that there exists no probabilistic model $e$ such that $\mathsf{supp}(e) = s_{\text{Bell}}$.

The new measurement scenario is described as follows.

\begin{align*}
X_{\text{Bell}} &= \{A_1,B_1,C_1,D_1,A_2,B_2,C_2,D_2\} \\
\mathcal{M}_{\text{Bell}} &= \{A_1,B_1,C_1,D_1\} \times \{A_2,B_2,C_2,D_2\} \\
O &= \{0,1,2\}
\end{align*}

The possibilistic model $s_{\text{Bell}}$ has possible sections defined as follows (c.f.~figure \ref{fig:bellmodel}).
\begin{equation*}
\begin{array}{lllclll}
A_1A_2 & & & \mapsto & 00, & 11, & 22 \\
B_1B_2, & C_1C_2, & D_1D_2 & \mapsto & 00, & 11 & \\
A_1B_2, & A_2B_1 & & \mapsto & 00, & 10, & 21 \\
A_1C_2, & A_2C_1 & & \mapsto & 00, & 11, & 21 \\
A_1D_2, & A_2D_1 & & \mapsto & 01, & 10, & 21 \\
B_1C_2, & B_2C_1 & & \mapsto & 00, & 11 & \\
B_1D_2, & B_2D_1 & & \mapsto & 00, & 11 & \\
C_1D_2, & C_2D_1 & & \mapsto & 01, & 10 & 
\end{array}
\end{equation*}

\begin{figure}
\caption{\label{fig:bellmodel} The possibilistic model $s_\text{Bell}$.}
\begin{center}
\begin{tikzpicture}[x=50pt,y=50pt,thick,label distance=-0.25em,baseline=(current bounding box.center),scale=.45]

\node (e) at (4.5,-0.2) {};
\node (n) at (2,4) {};
\node (T) at (0,6) {};
\node (u) at (0,0.5) {};

\node [inner sep=0em] (a) at (0,0) {};
\node [inner sep=0em] (b) at ($ (a) + (e) $) {};
\node [inner sep=0em] (c) at ($ (a) + 2*(e) $) {};
\node [inner sep=0em] (d) at ($ (a) + 3*(e) $) {};
\node [inner sep=0em] (a') at ($ (a) + (n) $) {};
\node [inner sep=0em] (b') at ($ (b) + (n) $) {};
\node [inner sep=0em] (c') at ($ (c) + (n) $) {};
\node [inner sep=0em] (d') at ($ (d) + (n) $) {};

\node [inner sep=0em] (a0) at ($ (a) + (T) $) {};
\node [inner sep=0em] (a1) at ($ (a0) + (u) $) {};
\node [inner sep=0em] (a2) at ($ (a0) + 2*(u) $) {};
\node [inner sep=0em] (a'0) at ($ (a0) + (n) $) {};
\node [inner sep=0em] (a'1) at ($ (a1) + (n) $) {};
\node [inner sep=0em] (a'2) at ($ (a2) + (n) $) {};
\node [inner sep=0em] (b0) at ($ (b) + (T) $) {};
\node [inner sep=0em] (b1) at ($ (b0) + (u) $) {};
\node [inner sep=0em] (b'0) at ($ (b0) + (n) $) {};
\node [inner sep=0em] (b'1) at ($ (b1) + (n) $) {};
\node [inner sep=0em] (c0) at ($ (c) + (T) $) {};
\node [inner sep=0em] (c1) at ($ (c0) + (u) $) {};
\node [inner sep=0em] (c'0) at ($ (c0) + (n) $) {};
\node [inner sep=0em] (c'1) at ($ (c1) + (n) $) {};
\node [inner sep=0em] (d0) at ($ (d) + (T) $) {};
\node [inner sep=0em] (d1) at ($ (d0) + (u) $) {};
\node [inner sep=0em] (d'0) at ($ (d0) + (n) $) {};
\node [inner sep=0em] (d'1) at ($ (d1) + (n) $) {};

\draw (a) -- (a');
\draw (a) -- (b');
\draw (a) -- (c');
\draw (a) -- (d');
\draw (b) -- (a');
\draw (b) -- (b');
\draw (b) -- (c');
\draw (b) -- (d');
\draw (c) -- (a');
\draw (c) -- (b');
\draw (c) -- (c');
\draw (c) -- (d');
\draw (d) -- (a');
\draw (d) -- (b');
\draw (d) -- (c');
\draw (d) -- (d');

\draw [dotted] (a2) -- (a);
\draw [dotted] (b1) -- (b);
\draw [dotted] (c1) -- (c);
\draw [dotted] (d1) -- (d);
\draw [dotted] (a'2) -- (a');
\draw [dotted] (b'1) -- (b');
\draw [dotted] (c'1) -- (c');
\draw [dotted] (d'1) -- (d');

\node [inner sep=0.2em,label=below:{$A_1$}] at (a) {$\bullet$};
\node [inner sep=0.2em,label=below:{$B_1$}] at (b) {$\bullet$};
\node [inner sep=0.2em,label=below:{$C_1$}] at (c) {$\bullet$};
\node [inner sep=0.2em,label=below:{$D_1$}] at (d) {$\bullet$};
\node [inner sep=0.2em,label=left:{$A_2$}] at (a') {$\bullet$};
\node [inner sep=0.2em,label=left:{$B_2$}] at (b') {$\bullet$};
\node [inner sep=0.2em,label=right:{$C_2$}] at (c') {$\bullet$};
\node [inner sep=0.2em,label=right:{$D_2$}] at (d') {$\bullet$};

\draw [thin] (a0) -- (c'0);
\draw [thin] (b0) -- (d'0);

\draw [thin] (a0) -- (b'0);
\draw [thin] (b0) -- (c'0);

\draw [thin] (a0) -- (a'0);
\draw [thin] (b0) -- (b'0);
\draw [thin] (c0) -- (c'0);
\draw [thin] (d0) -- (d'0);

\draw [thin] (b0) -- (a'0);
\draw [thin] (c0) -- (b'0);

\draw [thin] (c0) -- (a'0);
\draw [thin] (d0) -- (b'0);

\draw [thin] (a0) -- (d'1);
\draw [thin] (d1) -- (a'0);

\draw [thin] (a1) -- (b'0);

\draw [thin] (d1) -- (c'0);

\draw [thin] (c0) -- (d'1);
\draw [thin] (b0) -- (a'1);

\draw [thin] (a1) -- (d'0);
\draw [thin] (c1) -- (d'0);

\draw [thin] (a1) -- (c'1);
\draw [thin] (b1) -- (d'1);
\draw [thin] (b1) -- (c'1);
\draw [thin] (a1) -- (a'1);
\draw [thin] (b1) -- (b'1);
\draw [thin] (c1) -- (c'1);
\draw [thin] (d1) -- (d'1);
\draw [thin] (c1) -- (a'1);
\draw [thin] (d1) -- (b'1);
\draw [thin] (c1) -- (b'1);

\draw [thin] (a2) -- (a'2);

\draw [thin] (a2) -- (b'1);

\draw [thin] (a2) -- (c'1);

\draw [thin] (a2) -- (d'1);

\draw [thin] (b1) -- (a'2);

\draw [thin] (c1) -- (a'2);

\draw [thin] (c1) -- (d'0);

\draw [thin] (d1) -- (a'2);

\draw [thin] (d0) -- (c'1);

\node [inner sep=0.2em] at (a0) {$\bullet$};
\node [inner sep=0.2em] at (a1) {$\bullet$};
\node [inner sep=0.2em] at (a2) {$\bullet$};
\node [inner sep=0.2em] at (b0) {$\bullet$};
\node [inner sep=0.2em] at (b1) {$\bullet$};
\node [inner sep=0.2em] at (c0) {$\bullet$};
\node [inner sep=0.2em] at (c1) {$\bullet$};
\node [inner sep=0.2em] at (d0) {$\bullet$};
\node [inner sep=0.2em] at (d1) {$\bullet$};
\node [inner sep=0.2em] at (a'0) {$\bullet$};
\node [inner sep=0.2em] at (a'1) {$\bullet$};
\node [inner sep=0.2em] at (a'2) {$\bullet$};
\node [inner sep=0.2em] at (b'0) {$\bullet$};
\node [inner sep=0.2em] at (b'1) {$\bullet$};
\node [inner sep=0.2em] at (c'0) {$\bullet$};
\node [inner sep=0.2em] at (c'1) {$\bullet$};
\node [inner sep=0.2em] at (d'0) {$\bullet$};
\node [inner sep=0.2em] at (d'1) {$\bullet$};

\end{tikzpicture}
\end{center}
\end{figure}
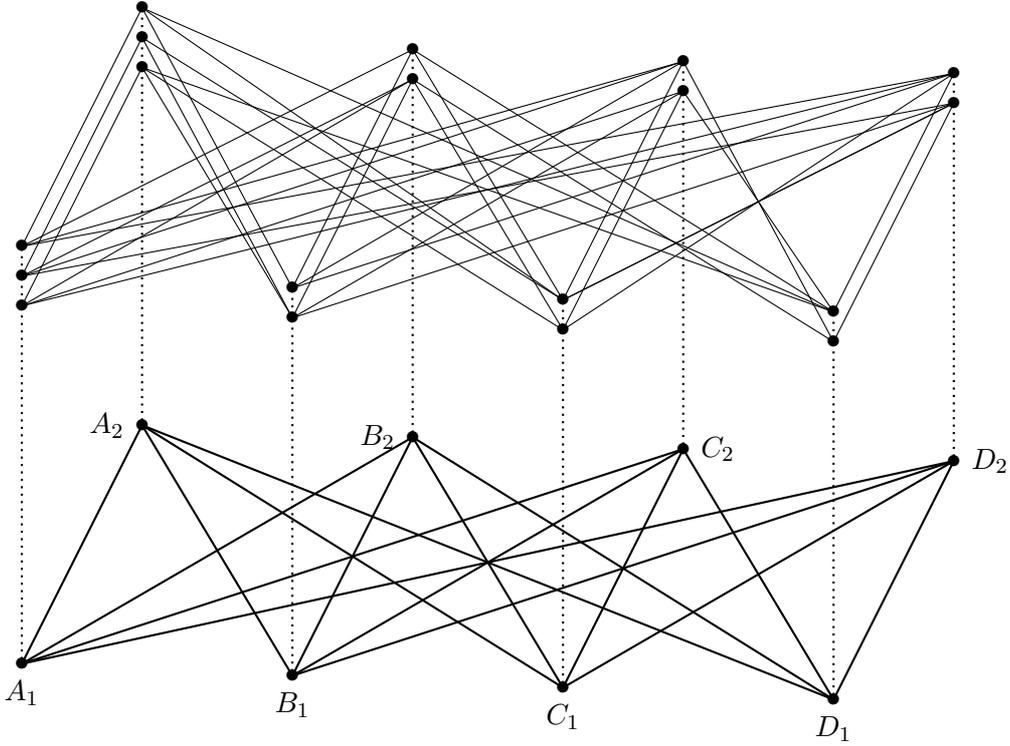

Note that maximal contexts in $s$ each contain two elements, so that to model these in an $n$-partite scenario requires $n=2$. The sets $X_{\text{Bell}}$ and $\mathcal{M}_{\text{Bell}}$ together with $O$ define the smallest bipartite measurement scenario which models the contexts $\mathcal{M}$ in bipartite form. For all $P,Q \in X$, the possible sections of $s_\text{Bell}$ at the context $P_1Q_2$ are defined either by the possible sections of $s$ at the context $PQ$ in the case that $P \neq Q$, or by the diagonal sections $\{oo \mid o \in \mathsf{supp}(s_{\{P\}})\}$ in the case that $P=Q$. The equivalence in terms of contextuality of the original and constructed Bell models was established in \cite{MansfieldBarbosa:QPL2013} and covers the type of contextuality (strong, logical, probabilistic) as well as the degree to which the model violates the analogous contextual inequalities. For our present purposes, it is also clear that the system of consistency equations yielded by a constructed Bell model is equivalent to the system of consistency equations for the original model, once the simple identifications of coefficients arising from the diagonalised contexts are taken into account.


\section*{Acknowledgements}
Support from the following is gratefully acknowledged: EPSRC EP/K015478/1, Quantum Mathematics and Computation (SA); Fondation Sciences Math\'{e}matiques de Paris, postdoctoral research grant eotpFIELD15RPOMT-FSMP1, Contextual Semantics for Quantum Theory (SM); FCT -- Funda\c{c}\~ao para a Ci\^encia e Tecnologia
(Portuguese Foundation for Science and Technology), 
PhD grant SFRH/BD/94945/2013 (RSB); John Templeton Foundation, Categorical Unification (RSB); U.S. AFOSR FA9550-12-1-0136, Topological and Game-Semantic Methods for Understanding Cyber-Security (SA, KK); the Oxford Martin School James Martin Program on Bio-inspired Quantum Technologies, (SA, SM, RSB); and Templeton World Charity Foundation (RL).

We are grateful to Janne Kujala for a number of helpful comments which led to improvements and clarifications in the presentation.
We also thank the anonymous journal referee for their comments.
\bibliographystyle{apalike}
\bibliography{nspoly}

\end{document}